\documentclass[journal,twoside,web]{ieeecolor}
\usepackage{generic}
\usepackage{cite}
\usepackage{amsmath,amssymb,amsfonts}
\usepackage{graphicx}
\usepackage{algorithm,algorithmic}
\usepackage{textcomp}
\usepackage{epsfig}
\usepackage{epstopdf}
\usepackage{color}
\usepackage{float}
\usepackage{ntheorem}
\usepackage{calrsfs}
\DeclareMathAlphabet{\pazocal}{OMS}{zplm}{m}{n}
\usepackage{subcaption}
\makeatletter
\renewcommand*\env@matrix[1][*\c@MaxMatrixCols c]{%
  \hskip -\arraycolsep
  \let\@ifnextchar\new@ifnextchar
  \array{#1}}
\usepackage{arydshln}
\usepackage{mathtools}
\usepackage[T1]{fontenc} 
\usepackage{helvet}
\usepackage{soul}
\usepackage{courier}
\usepackage[ddmmyyyy, hhmmss]{datetime}

\definecolor{cadmiumgreen}{rgb}{0.0, 0.42, 0.24}
\usepackage[dvipsnames]{xcolor}
\newcommand{\revised}[1]{{\color{black}\selectfont #1}}
\newcommand{\revisedtwo}[1]{{\color{black}\selectfont #1}}

\usepackage{tikz}
\usetikzlibrary{tikzmark,patterns}
\usetikzlibrary{shapes,arrows,arrows.meta,chains,matrix,positioning,scopes,calc,patterns,shapes.geometric,intersections,backgrounds,fit,decorations.text}
\tikzset{
	block/.style = {draw, rectangle,
		minimum height=1cm,
		minimum width=2cm},
	input/.style = {coordinate,node distance=1cm},
	output/.style = {coordinate,node distance=4cm},
	arrow/.style={draw, -latex,node distance=2cm},
	pinstyle/.style = {pin edge={latex-, black,node distance=2cm}},
	sum/.style = {draw, circle, node distance=1cm},
}
\usetikzlibrary{shapes,arrows}
\usetikzlibrary{decorations.markings,decorations.pathmorphing,
	decorations.pathreplacing}
\usepackage[american voltages]{circuitikz}
\usetikzlibrary{patterns,arrows,shapes,calc,positioning,calc,intersections}
\usetikzlibrary{decorations.markings}
\usetikzlibrary{arrows.meta,shapes.arrows}

\def\BibTeX{{\rm B\kern-.05em{\sc i\kern-.025em b}\kern-.08em
    T\kern-.1667em\lower.7ex\hbox{E}\kern-.125emX}}
\markboth{\hskip25pc IEEE TRANSACTIONS AND JOURNALS TEMPLATE}
{NAKANO \MakeLowercase{\textit{et al.}}: Decentralized Data-Driven Stabilization of Interconnected Systems based on Dissipativity}

\usepackage[hidelinks]{hyperref}

\newtheorem{theorem}{Theorem}

\newtheorem{lemma}{Lemma}
\newtheorem{prop}{Proposition}
\newtheorem{assumption}{Assumption}
\newtheorem{definition}{Definition}

\newtheorem{problem}{Problem}

\title{Dissipativity-Based Data-Driven Decentralized Control of Interconnected Systems}
\author{Taiki Nakano, Ahmed Aboudonia, Jaap Eising, Andrea Martinelli, Florian Dörfler and John Lygeros
\thanks{
	T. Nakano, A. Martinelli, F. Dörfler and J. Lygeros are with the Automatic Control Laboratory, ETH Zurich, 8092 Switzerland {(e-mails: \{\texttt{nakanot}, \texttt{andremar}, \texttt{doerfler}, \texttt{lygeros}\}\texttt{@control.ee.ethz.ch}).
	T. Nakano is also with the Learning and Dynamical Systems Group, Max Planck Institute for Intelligent Systems, 72076 T{\"u}bingen, Germany.
  A. Aboudonia is with the University of California Berkeley, 94720 CA, USA (email: \texttt{aboudonia@berkeley.edu}).
	J. Eising is with ENTEG, University of Groningen, 9747 AG Groningen, The Netherlands (email: \texttt{j.eising@rug.nl}).
	Research supported by the Max Planck ETH Center for Learning Systems, NCCR Automation and the SNF/FW Weave Project 200021E\char`_20397.}
	}
}

\begin{document}
\maketitle

\begin{abstract}
	We propose data-driven decentralized control algorithms for stabilizing interconnected \revised{discrete-time linear time-invariant} systems.
	We first derive a data-driven condition to synthesize a local controller that ensures the dissipativity of the local subsystems.
	Then, we propose data-driven decentralized stability conditions for the global system based on the dissipativity of each local system.
  Since both conditions take the form of linear matrix inequalities and are based on dissipativity theory, this yields a unified pipeline, resulting in a data-driven decentralized control algorithm.
	As a special case, we also consider stabilizing systems interconnected through diffusive coupling and propose a control algorithm.
	We validate the effectiveness and the scalability of the proposed control algorithms in numerical examples in the context of microgrids.
\end{abstract}

\begin{IEEEkeywords}
Decentralized Control, Data-Driven Control, Dissipativity Theory, Interconnected Systems.
\end{IEEEkeywords}

\section{Introduction}
\label{sec:introduction}
\IEEEPARstart{C}{ontrol} of interconnected systems is an active research field
with applications in various cyber-physical domains including power systems, mobile autonomous robots, financial networks and the Internet.
These systems take the form of a set of local subsystems coupled by an interconnection structure, forming a global system.
Instead of analyzing the global system as a whole, we can often exploit the local properties of the subsystems to make the analysis far more tractable. 
Moreover, when designing controllers for the subsystems, the global interconnection structure is often not available.
This motivates the use of decentralized control methods, where the controllers are synthesized within each local subsystem and not in a centralized fashion~\cite{bakule2008decentralized,siljak2013decentralized,bakule2014decentralized}.

One particularly useful control design tool for decentralized control is passivity theory~\cite{van2000l2}, and more generally, dissipativity theory~\cite{willems1972dissipative1}. Dissipativity theory is a method to model energy exchange between subsystems and is used for the analysis and control synthesis of interconnected systems~\cite{chopra2006passivity,arcak2016networks}, including electrical networks and mechanical systems~\cite{ortega1998passivity}.
Although dissipativity theory is mainly developed for continuous-time systems,
discrete-time systems have also been studied~\cite{byrnes1993discrete,byrnes1994losslessness,navarro2002dissipativity,moreschini2024generalized},
also in the context of interconnected systems~\cite{mccourt2012stability,aboudonia2021passivity,martinelli2023interconnection}.
The developed control schemes, however, assume the availability of a model of the system.

The use of data in control design has been growing in recent years, fostered by unprecedented sensing availability, data storage capabilities and widespread success over multiple domains.
One typical strategy to control unknown dynamical systems by using data
is to first conduct system identification~\cite{ljung1998system}, then use the resulting model to design a controller~\cite{gevers2005identification}.
In contrast, direct data-driven control aims to directly synthesize a controller from collected data, circumventing the intermediate step of system identification.
The fundamental lemma~\cite{willems2005note} originating in behavioral systems theory~\cite{willems1997introduction} has recently inspired extensive research on direct data-driven analysis and synthesis methods~\cite{markovsky2008data,yang2015data,coulson2019data}.
Various recent efforts have been devoted to data-driven state-feedback control synthesis~\cite{de2019formulas,berberich2020data,van2020data} and data-driven dissipativity analysis~\cite{maupong2017lyapunov,koch2021provably,rosa2021one,van2022data}.
Fewer works have focused on making use of dissipativity theory to design data-driven state-feedback controllers~\cite{nguyen2024synthesis,kristovic2024data,tanaka2024algebraic} and only for individual systems.
There has also been research on data-driven control of networked systems \revised{\cite{baggio2021data,wang2023data,celi2023distributed,akbarzadeh2024data,liao2024data}}.
\revised{However, none of these works simultaneously handles an unknown interconnection structure and noise-corrupted data.
Handling these aspects simultaneously is challenging, since the local controller must be robust to noise in the local data while guaranteeing global stability for all interconnection structures consistent with the noisy data.}

\revised{
	In the paper, we focus on discrete-time linear time-invariant interconnected systems, where the models of the local subsystem and the interconnection structure are both unknown, and the data is corrupted by noise.
	By treating the supply rate matrices as joint decision variables in both the local dissipativity condition and the global stability condition, we propose a unified data-driven decentralized algorithm that synthesizes local controllers with global stability guarantees, without requiring any global data or centralized coordination.
}
The main contributions of the paper are as follows:
\begin{itemize}
	\item We derive a data-driven linear matrix inequality~(LMI) condition to compute a state-feedback controller that makes each subsystem dissipative;
	\item We propose data-driven decentralized LMI conditions that ensure the stability of the global system based on the dissipativity of each subsystem;
	\item By combining the previous, we obtain local control design algorithms that are data-driven, decentralized, and stabilize the global system. 
	\item Through numerical examples, we show that \revised{the} algorithms require less data and less computation time compared to a centralized data-driven control method.
\end{itemize} 

In Section \ref{sec:preliminaries}, we introduce preliminaries, while in Section \ref{sec:problem_formulation}, we present the problem formulation that we tackle in the paper.
In Section \ref{sec:solution-to-problem-1}, we propose a data-driven decentralized control algorithm for the considered class of interconnected systems.
In Section \ref{sec:control-design-for-unknown-diffusive-coupling}, we propose an algorithm for a specific class of interconnected systems with diffusive coupling.
In Section \ref{sec:numerical_examples}, we showcase some relevant numerical examples by considering a microgrid control case study.
Conclusions and directions for future work are outlined in Section \ref{sec:conclusions_and_discussions}.

\section{Preliminaries} \label{sec:preliminaries}
\subsection{Notation and matrix theory}
The $n\times m$ zero matrix and the $n\times n$ identity matrix are denoted by $0_{n \times m}$ and $I_n$ respectively;
the subscripts are omitted when the dimension is clear from the context.
Block diagonal matrices are denoted by $\operatorname{diag}(A_1,\dots,A_k)$.
The Moore-Penrose pseudoinverse of a matrix $M \in \mathbb{R}^{m \times n}$ is denoted by $M^\dagger \in \mathbb{R}^{n \times m}$.
The set of symmetric $n\times n$ matrices is denoted by $\mathbb{S}^n$.
We denote $A>0$  ($\geq 0$, $<0$, $\leq 0$)  when a symmetric matrix $A \in \mathbb{S}^n$ is positive definite (positive semidefinite, negative definite, negative semidefinite, respectively).
The kernel of a matrix $A$ is denoted by $\operatorname{ker}(A)$.
The inertia of a symmetric matrix $A$ is denoted by $\operatorname*{In}(A) = (\rho_-, \rho_0, \rho_+)$,
where $\rho_-,\rho_0$ and $\rho_+$ are the number of negative, zero and positive eigenvalues of $A$,
respectively.
We indicate objects that can be inferred by symmetry with the symbol ``$\bullet$''.

For a symmetric matrix $\Pi \in \mathbb{S}^{q+r}$, we define the following sets induced by a Quadratic Matrix Inequality (QMI) by
\begin{subequations}
  \begin{eqnarray}
      \pazocal{Z}_r(\Pi) \coloneqq  \left\{ Z \in \mathbb{R}^{r \times q} : 
    \begin{bmatrix}
      I_q \\ Z 
    \end{bmatrix}^\top 
    \Pi 
    \begin{bmatrix}
      I_q \\ Z 
    \end{bmatrix}
    \geq 0 \right\}, \label{eq:def-qmi} \\
    \pazocal{Z}^+_r(\Pi) \coloneqq  \left\{ Z \in \mathbb{R}^{r \times q} : 
    \begin{bmatrix}
      I_q \\ Z 
    \end{bmatrix}^\top 
    \Pi 
    \begin{bmatrix}
      I_q \\ Z 
    \end{bmatrix}
    > 0 \right\}. \label{eq:def-strict-qmi}
  \end{eqnarray}
\end{subequations}

We consider the partition of $\Pi$ as
$
  \Pi = 
  \left[
  \begin{smallmatrix}
    \Pi_{11} & \Pi_{12} \\
    \Pi_{21} & \Pi_{22}
  \end{smallmatrix}
  \right]
$
and define the set $\mathbf{\Pi}_{q,r}$ as
 \begin{equation}
  \begin{aligned}
    \mathbf{\Pi}_{q,r} \coloneqq 
  \left\{ \Pi \in \mathbb{S}^{q+r} : 
  \Pi_{22} \leq 0,~\Pi_{11} - \Pi_{12} \Pi_{22}^\dagger \Pi_{21} \geq 0, 
  \right. \\ \left.
  \operatorname*{ker}(\Pi_{22}) \subseteq  \operatorname*{ker}(\Pi_{12}) \right\}.
  \end{aligned}
 \end{equation}

\begin{prop}[Matrix S-lemma {\cite[Theorem 4.7]{van2023quadratic}}] \label{prop:matrix-S-lemma}
  Consider $M,N \in \mathbb{S}^{q+r}$.
  If there exists a scalar $\alpha \geq 0$ such that $M - \alpha N \geq 0$, 
  then $\pazocal{Z}_r(N) \subseteq \pazocal{Z}_r(M)$.
  Moreover, if $N \in \mathbf{\Pi}_{q,r}$ and $N$ has at least one positive eigenvalue, then $\pazocal{Z}_r(N) \subseteq \pazocal{Z}_r(M)$ if and only if there exists $\alpha \geq 0$ such that $M - \alpha N \geq 0$.
\end{prop}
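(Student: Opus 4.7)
The plan is to split the proposition into its two directions: the (easy) sufficiency, which holds without additional assumptions, and the (harder) necessity, which relies on $N \in \mathbf{\Pi}_{q,r}$ and the positive-eigenvalue hypothesis.

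For sufficiency, I would take any $Z \in \pazocal{Z}_r(N)$ and pre- and post-multiply the assumed matrix inequality $M - \alpha N \geq 0$ by $\bigl[\begin{smallmatrix} I_q \\ Z \end{smallmatrix}\bigr]^\top$ and $\bigl[\begin{smallmatrix} I_q \\ Z \end{smallmatrix}\bigr]$, respectively. Congruence preserves positive semidefiniteness, and since $\alpha \geq 0$ and the $N$-congruence is nonnegative by assumption, the $M$-congruence must also be nonnegative, so $Z \in \pazocal{Z}_r(M)$. This step is immediate and uses none of the structural hypotheses on $N$.

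For necessity, the strategy is a matrix-valued, lossless S-procedure. I would lift the set inclusion $\pazocal{Z}_r(N) \subseteq \pazocal{Z}_r(M)$ into a statement about convex cones in $\mathbb{S}^{q+r}$. Concretely, associate to each $Z$ the rank-$q$ positive semidefinite matrix $\Phi(Z) := \bigl[\begin{smallmatrix} I_q \\ Z \end{smallmatrix}\bigr]\bigl[\begin{smallmatrix} I_q \\ Z \end{smallmatrix}\bigr]^\top$; the inclusion then reads: for every $X$ in the image of $\Phi$ satisfying $\operatorname{tr}(NX) \geq 0$, one has $\operatorname{tr}(MX) \geq 0$. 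The hypothesis $N \in \mathbf{\Pi}_{q,r}$, through its Schur-complement content ($\Pi_{22} \leq 0$, $\Pi_{11} - \Pi_{12}\Pi_{22}^\dagger \Pi_{21} \geq 0$, and $\operatorname{ker}(\Pi_{22}) \subseteq \operatorname{ker}(\Pi_{12})$), supplies an explicit parametrization of $\pazocal{Z}_r(N)$ that makes the conic hull of the constrained image of $\Phi$ convex. The positive-eigenvalue hypothesis then plays the role of Slater's condition, ensuring that this cone has nonempty relative interior so that a separation/duality argument extracts a scalar $\alpha \geq 0$ with $M - \alpha N \geq 0$.

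I expect the main obstacle to be the matrix-level convexity step: proving that the collection $\{X \geq 0 : \operatorname{tr}(NX) \geq 0,\; X = \Phi(Z) \text{ for some } Z\}$ is rich enough, after taking conic hull, to yield a strong separation of $M$ from the polar of this cone. This is precisely where the structural assumption $N \in \mathbf{\Pi}_{q,r}$ cannot be dropped; without it, the non-convex obstructions familiar from the failure of the classical scalar S-lemma beyond two quadratic forms reappear. I would appeal directly to the rank-one decomposition and SDP-duality machinery developed for matrix QMIs in \cite{van2023quadratic} (the arguments underlying Theorem~4.7 there), after which the final extraction of the multiplier $\alpha$ is a standard separation argument in the cone of positive semidefinite matrices in $\mathbb{S}^{q+r}$.
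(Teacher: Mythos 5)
The paper does not actually prove this proposition: it is imported verbatim as a known result, with the proof delegated entirely to \cite[Theorem 4.7]{van2023quadratic}. So there is no in-paper argument to compare against, and the relevant question is whether your proposal would stand on its own. Your sufficiency direction does: multiplying $M-\alpha N\geq 0$ on both sides by $\bigl[\begin{smallmatrix} I_q \\ Z\end{smallmatrix}\bigr]$ and using $\alpha\geq 0$ together with $Z\in\pazocal{Z}_r(N)$ is exactly the standard one-line argument, and it correctly uses none of the structural hypotheses on $N$.

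The necessity direction, however, is a strategy outline rather than a proof, and the outline points at the wrong mechanism. Your plan lifts the inclusion to trace inequalities $\operatorname{tr}(NX)\geq 0\Rightarrow\operatorname{tr}(MX)\geq 0$ over matrices $X=\Phi(Z)$ and hopes to close the argument by convexifying this nonconvex family and separating; you then defer the decisive losslessness step to ``the rank-one decomposition and SDP-duality machinery'' of the reference. That deferral is where the gap sits: the convexity of the conic hull of $\{\Phi(Z):\operatorname{tr}(N\Phi(Z))\geq 0\}$ and the exactness of the resulting relaxation are precisely the content of the theorem being proved, so nothing has been established. Moreover, the cited proof does not proceed by SDP duality at all. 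It works by slicing: one shows (using the $\mathbf{\Pi}_{q,r}$ structure of $N$) that the matrix inclusion $\pazocal{Z}_r(N)\subseteq\pazocal{Z}_r(M)$ is equivalent to the family of scalar implications ``for all $x\in\mathbb{R}^q$ and $z\in\mathbb{R}^r$, $\bigl[\begin{smallmatrix} x \\ z\end{smallmatrix}\bigr]^\top N\bigl[\begin{smallmatrix} x \\ z\end{smallmatrix}\bigr]\geq 0$ implies $\bigl[\begin{smallmatrix} x \\ z\end{smallmatrix}\bigr]^\top M\bigl[\begin{smallmatrix} x \\ z\end{smallmatrix}\bigr]\geq 0$'' (the nontrivial direction being that any feasible pair $(x,z)$ can be realized as $z=Zx$ for some $Z\in\pazocal{Z}_r(N)$); one then applies the classical scalar S-lemma to each slice, with the positive-eigenvalue hypothesis furnishing the strictly feasible point, and finally shows the resulting multiplier can be chosen independently of $x$. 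Your intuition that the positive eigenvalue plays a Slater-type role is correct in spirit, but to turn your proposal into a proof you would either have to carry out the conic-hull/separation argument yourself (nontrivial, and not what the reference does) or switch to the slicing-plus-uniformization route. As it stands, the hard direction is asserted, not proved.
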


For a set $\pazocal{S} \subseteq \mathbb{R}^{r \times q}$, we define $\pazocal{S}^\top \coloneqq \{ Z^\top \mid Z \in \pazocal{S} \}$.
\begin{prop}[Dual QMI~{{\cite[Proposition 3.1]{van2023quadratic}}}] \label{prop:dualization-lemma}
  Let $\Pi \in \mathbb{S}^{q+r}$ be invertible and let $\pazocal{Z}_r(\Pi)$ be nonempty.
  Then, $\left( \pazocal{Z}_r(\Pi) \right)^\top = \pazocal{Z}_q(\hat{\Pi})$ for
  $\hat{\Pi} \coloneqq 
  \left[
    \begin{smallmatrix}
      0 & - I_r \\
      I_q & 0 
    \end{smallmatrix}
  \right]
    \Pi^{-1}
    \left[
    \begin{smallmatrix}
      0 & - I_q \\
      I_r & 0
    \end{smallmatrix}
    \right]
    $.
\end{prop}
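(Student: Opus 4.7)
The plan is to translate the QMI defining $\pazocal{Z}_q(\hat{\Pi})$ into a QMI involving $\Pi^{-1}$ directly, and then establish the resulting equivalence through a congruence and inertia argument built around a carefully chosen invertible matrix whose columns span orthogonal subspaces.

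First, I would reduce the claimed set equality to a cleaner QMI equivalence. Define $F \coloneqq \begin{bmatrix} I_q \\ Z \end{bmatrix}$ and $G \coloneqq \begin{bmatrix} -Z^\top \\ I_r \end{bmatrix}$, which satisfy $F^\top G = 0$ by construction. A direct block-matrix multiplication, absorbing the two bracketed factors in the definition of $\hat{\Pi}$ into the left and right testing factors $\begin{bmatrix} I_r & Z \end{bmatrix}$ and $\begin{bmatrix} I_r \\ Z^\top \end{bmatrix}$, gives
\begin{equation*}
\begin{bmatrix} I_r \\ Z^\top \end{bmatrix}^\top \hat{\Pi} \begin{bmatrix} I_r \\ Z^\top \end{bmatrix} \;=\; -\,G^\top \Pi^{-1} G.
\end{equation*}
Hence $Z^\top \in \pazocal{Z}_q(\hat{\Pi})$ iff $G^\top \Pi^{-1} G \leq 0$, and the statement reduces to proving, for every $Z$, the biconditional $F^\top \Pi F \geq 0 \iff G^\top \Pi^{-1} G \leq 0$.

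Second, to handle this biconditional I would introduce the square matrix $T \coloneqq [\,F \;\; G\,] \in \mathbb{R}^{(q+r)\times(q+r)}$, which is nonsingular since $T^\top T = \operatorname{diag}(I_q + Z^\top Z,\, I_r + Z Z^\top) > 0$. By Sylvester's law of inertia, $\operatorname*{In}(T^\top \Pi T) = \operatorname*{In}(\Pi)$, and because $\Pi$ is symmetric and invertible, $\operatorname*{In}(\Pi^{-1}) = \operatorname*{In}(\Pi)$, so the analogous congruence also gives $\operatorname*{In}(T^{-\top} \Pi^{-1} T^{-1}) = \operatorname*{In}(\Pi)$. In the $(F,G)$-block decomposition, the $(1,1)$-block of $T^\top \Pi T$ is $F^\top \Pi F$ and its $(2,2)$-block is $G^\top \Pi G$; comparing the $(2,2)$-block of $(T^\top \Pi T)^{-1}$ computed two different ways — once via Schur complement on $T^\top \Pi T$, once via the explicit formula $T^{-1} \Pi^{-1} T^{-\top}$ (whose $(2,2)$-block is $(I_r+ZZ^\top)^{-1} G^\top \Pi^{-1} G (I_r+ZZ^\top)^{-1}$) — yields an algebraic identity that forces $F^\top \Pi F$ and $G^\top \Pi^{-1} G$ to have opposite sign definiteness. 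The nonemptiness hypothesis on $\pazocal{Z}_r(\Pi)$ enters here: the existence of a witness $Z_0$ with $F(Z_0)^\top \Pi F(Z_0) \geq 0$ constrains $\operatorname*{In}(\Pi)$ enough to rule out spurious sign flips, closing both directions of the equivalence.

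The hard part will be the degenerate case where $F^\top \Pi F$ (or one of the relevant blocks) is singular, because the Schur-complement identity used in the second step assumes invertibility. I would circumvent this by a perturbation argument: replace $\Pi$ by $\Pi + \varepsilon E$ for a sign-definite $E$ that preserves the inertia of $\Pi$, establish the equivalence in the nondegenerate regime, and then pass to the limit $\varepsilon \to 0$ using closedness of the QMI sets $\pazocal{Z}_r(\cdot)$ and $\pazocal{Z}_q(\cdot)$. An alternative closure of the boundary case is via Proposition~\ref{prop:matrix-S-lemma}, which can be applied once the nondegenerate biconditional is available.
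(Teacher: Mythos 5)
The paper offers no proof of this proposition: it is imported from \cite[Proposition 3.1]{van2023quadratic} as a preliminary, so there is nothing in the paper itself to compare your argument against. Your first step is correct and easy to verify: with $F = \left[\begin{smallmatrix} I_q \\ Z \end{smallmatrix}\right]$ and $G = \left[\begin{smallmatrix} -Z^\top \\ I_r \end{smallmatrix}\right]$ one computes $\left[\begin{smallmatrix} I_r \\ Z^\top \end{smallmatrix}\right]^\top \hat{\Pi} \left[\begin{smallmatrix} I_r \\ Z^\top \end{smallmatrix}\right] = -G^\top \Pi^{-1} G$, so the claim does reduce to the pointwise biconditional $F^\top \Pi F \geq 0 \iff G^\top \Pi^{-1} G \leq 0$.

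The genuine gap is the second step, where you assert that the Schur-complement comparison ``forces $F^\top \Pi F$ and $G^\top \Pi^{-1} G$ to have opposite sign definiteness,'' with nonemptiness of $\pazocal{Z}_r(\Pi)$ supplying the needed inertia constraint. Nonemptiness only forces $\Pi$ to have at least $q$ positive eigenvalues; it does not force $\operatorname{In}(\Pi) = (r,0,q)$, and without that the biconditional is simply false. Concretely, $\Pi = I_2$ with $q = r = 1$ satisfies every hypothesis as stated (invertible, $\pazocal{Z}_1(I_2) = \mathbb{R}$ nonempty), yet $\hat{\Pi} = -I_2$ and $\pazocal{Z}_1(-I_2) = \emptyset \neq \left(\pazocal{Z}_1(I_2)\right)^\top$; here $F^\top \Pi F$ and $G^\top \Pi^{-1} G$ are both positive definite, not of opposite sign. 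The statement as transcribed in this paper has dropped a hypothesis from the cited source, namely $\Pi \in \mathbf{\Pi}_{q,r}$ (in particular $\Pi_{22} \leq 0$), which together with invertibility and nonemptiness pins the inertia to exactly $(r,0,q)$ --- this is precisely how the proposition is invoked in the proof of Theorem~\ref{thm:decentralized-unknown}, where $\operatorname{In}(\Theta_i) = (\tilde{p}_i,0,p_i)$ is established first. Once that inertia is available, your plan does close: $\operatorname{In}(T^\top \Pi T) = (r,0,q)$ by Sylvester, Haynsworth's inertia-additivity formula applied to the $(1,1)$-block $F^\top \Pi F$ of $T^\top \Pi T$ and to the $(2,2)$-block of $(T^\top \Pi T)^{-1}$ yields both implications in the nondegenerate case, and a perturbation or generalized-Schur-complement argument handles singular blocks. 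But as written, the ``opposite sign definiteness'' claim is exactly the point where the missing inertia hypothesis must be injected, and no purely algebraic identity can substitute for it.
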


\subsection{Dissipativity theory}
\begin{definition} \label{def:dissipative-systems}
  The system 
  \begin{equation} \label{eq:lti-system}
    \begin{aligned}
      x(t+1) &= A x(t) + B v(t), \\
      y(t) &=  C x(t) + D v(t),
  \end{aligned}
  \end{equation}
  with $x(t) \in \mathbb{R}^{n}$, $v(t) \in \mathbb{R}^q$ and $y(t) \in \mathbb{R}^p$ is said to be dissipative with respect to the supply rate
  $
    s(v(t),y(t)) = 
    \left[
    \begin{smallmatrix}
      v(t) \\ y(t)  
    \end{smallmatrix}\right]^\top 
    S 
    \left[
    \begin{smallmatrix}
      v(t) \\ y(t)  
    \end{smallmatrix}\right]
  $
  with $S \in \mathbb{R}^{(q+p) \times (q+p)}$,
  if there exists a {storage function}
  $
    V(x) = x^\top P x
  $
  with $P > 0$,
  such that for all $x(t) \in \mathbb{R}^n$ and $v(t) \in \mathbb{R}^q$,
  \begin{equation*}
      V(x(t+1))-V(x(t)) \leq s(v(t),y(t)).
  \end{equation*}
\end{definition}

\revised{Dissipativity captures the relation that the change of stored energy in the system does not exceed the supplied energy.}
We also note that quadratic supply rates include important system-theoretic properties such as passivity by taking
$S = 
\left[ 
  \begin{smallmatrix}
    0 & \frac{1}{2}I \\
    \frac{1}{2}I & 0
  \end{smallmatrix}
\right]$
and finite-gain $\pazocal{L}_2$-stability with gain less than or equal to $\gamma >0$
by taking $
  S = 
\left[
  \begin{smallmatrix}
    \gamma^2 I_q & 0 \\
    0 & - I_p
  \end{smallmatrix}
\right]$.
It can also be shown that for quadratic supply rates one can consider a quadratic storage function without loss of generality.
Moreover, the storage function matrix $P$ can be assumed to be positive definite without loss of generality as long as the pair $(C,A)$ is observable \cite{burohman2023data}.

Whether a system is dissipative or not can be verified by the feasibility of a linear matrix inequality (LMI).
\revised{In this paper, we work with quadratic supply rates that satisfy $\operatorname*{In}(S) = (p, 0, q)$, and recall an LMI formulation of Definition~\ref{def:dissipative-systems}.}
  \revised{
  \begin{prop}[Dual dissipativity LMI~{\cite[Proposition 2]{van2022data}}] \label{prop:dual-dissipativity-lmi}
    Assume that the matrix~$S$ has inertia~$\operatorname*{In}(S) = (p, 0, q)$ and $P  > 0$.
    Then, system \eqref{eq:lti-system} is dissipative with respect to the supply rate $s(v(t),y(t)) =\left[\begin{smallmatrix}
        v(t) \\ y(t)  
      \end{smallmatrix}\right]^\top 
      S
      \left[
        \begin{smallmatrix}
          v(t) \\ y(t)  
        \end{smallmatrix}
      \right]$, if and only if the following LMI holds:
      \begin{align*}
        \setlength\arraycolsep{0.2pt}
      &
      \begin{bmatrix}
        I & 0 \\ 
        A^\top & C^\top 
      \end{bmatrix}^\top 
      \begin{bmatrix}
        P^{-1} & 0 \\
        0 & - P^{-1}
      \end{bmatrix}
      \begin{bmatrix}
        I & 0 \\ 
        A^\top & C^\top
      \end{bmatrix} \\
      + &
      \begin{bmatrix}
        0 & I \\ 
        B^\top & D^\top 
      \end{bmatrix}^\top 
      \begin{bmatrix}
        0 & - I \\
        I & 0 
      \end{bmatrix}
      S^{-1}
      \begin{bmatrix}
        0 & - I \\
        I & 0
      \end{bmatrix} 
      \begin{bmatrix}
        0 & I \\ 
        B^\top & D^\top
      \end{bmatrix} \geq 0 
    \end{align*}
  \end{prop}
  }

  \revised{Note that while the supply rate is expressed in 
  terms of $S$, the LMI condition involves $S^{-1}$, which 
  is well-defined under the inertia assumption 
  $\operatorname*{In}(S) = (p, 0, q)$.}
  As introduced in~\cite[Assumption (A1)]{van2022data}, 
  the assumption $\operatorname*{In}(S) = (p, 0, q)$ is satisfied for system-theoretic properties including passivity and finite-gain $\pazocal{L}_2$ stability.

\section{Problem Formulation}
\label{sec:problem_formulation}
\subsection{System class}
We consider a class of systems comprising $k$ heterogeneous square linear time-invariant subsystems interconnected with each other.
The dynamics of the $i$-th subsystem are
\begin{equation} \label{eq:dynamics-of-the-ith-subsystem-with-noise}
	\begin{aligned}
		x_i(t+1) &= A_i x_i(t) + B_{1,i} u_i(t) + B_{2,i} v_i(t) + w_{1,i}(t) , \\
		y_i(t)   &= C_i x_i(t) + D_{1,i} u_i(t) + D_{2,i} v_i(t) + w_{2,i}(t),
	\end{aligned}
\end{equation}
where $x_i(t) \in \mathbb{R}^{n_i}$ is the state, $u_i(t) \in \mathbb{R}^{m_i}$ is the {control input}, $v_i(t) \in \mathbb{R}^{p_i}$ is the {interconnection input}, which is an exogenous input affected by other subsystems, $w_{1,i}(t) \in \mathbb{R}^{n_i}$ is the process noise, $w_{2,i}(t) \in \mathbb{R}^{p_i}$ is the measurement noise and $y_i(t) \in \mathbb{R}^{p_i}$ is the output.
The output affects the interconnection inputs through a linear interconnection structure
\begin{equation} \label{eq:interconnection-via-linear-relation-with-noise}
  v_i(t) = \sum_{j \in \pazocal{N}_i} M_{ij} y_j(t) + \xi_i(t),
\end{equation}
for each $i\in\{1,\dots,k\}$,
where $\xi_i(t) \in \mathbb{R}^{p_i}$ is the interconnection noise and $M_{ij} \in \mathbb{R}^{p_i \times p_j}$.
Here, $\pazocal{N}_i$ is the neighborhood set of the $i$-th subsystem that contains the indices of all subsystems whose outputs affect the $i$-th subsystem.
We assume for simplicity that $i \in \pazocal{N}_i$ holds.

We denote the global interconnection matrix $M$ by
\begin{equation*}
	M  = 
	\begin{bmatrix} 
		M_{11} & \cdots & M_{1k} \\
		\vdots & \ddots & \vdots \\
		M_{k1} & \cdots & M_{kk}
	\end{bmatrix}
	\in \mathbb{R}^{p \times p} .
\end{equation*}
From \eqref{eq:interconnection-via-linear-relation-with-noise}, we set $M_{ij} = 0$ for all $j \notin \pazocal{N}_i$.
We also assume that the interconnection is symmetric, that is, $M = M^\top$ holds.
This is often not restrictive, for example in cases where the interconnection relation can be characterized by some symmetric relations between subsystems.
For example, bidirectional diffusive coupling, a specific type of interconnection structure, satisfies this assumption.
\revised{
Additionally, we assume that $I - 
\operatorname{diag}(D_{2,1},\ldots,D_{2,k}) M$ is 
invertible to avoid algebraic loops in the interconnected 
system.}

We denote the elements of the set $\pazocal{N}_i$ as 
$\pazocal{N}_i = \{i\} \cup \{\sigma(1),\dots,\sigma(|\pazocal{N}_i|-1)\}$ and define $\tilde{p}_i \coloneqq \sum_{j \in \pazocal{N}_i}p_j$ and
\begin{equation*}
  \begin{aligned}
    \tilde{y}_i(t) &\coloneqq \begin{bmatrix}
			y_i (t) \\
      y_{\sigma(1)} (t) \\
      \vdots \\
      y_{\sigma(|\pazocal{N}_i|-1)} (t) 
    \end{bmatrix} \in \mathbb{R}^{\tilde{p}_i}, \\
    \tilde{M}_i^\mathrm{r} &\coloneqq 
    \begin{bmatrix}
			M_{ii} & M_{i\sigma(1)} & \cdots & M_{i\sigma(|\pazocal{N}_i|-1)}
    \end{bmatrix} \in \mathbb{R}^{p_i \times \tilde{p}_i} .
  \end{aligned}
  \end{equation*}
Then, the symmetric interconnection structure \eqref{eq:interconnection-via-linear-relation-with-noise} can be written as 
\begin{equation} \label{eq:interconnection-output-local-relation}
  v_i(t) = \tilde{M}^\mathrm{r}_i \tilde{y}_i(t) + \xi_i(t).
\end{equation}

As a special case of the symmetric interconnection structure, we also consider diffusive coupling~\cite[Chapter 8]{bullo2018lectures}, an interconnection structure seen in many applications whenever potential variables (for example voltages in the microgrid case) are used as internal variables.
The interconnection structure is a diffusive coupling, if there exist scalars $\{a_{ij}\}_{j \in \pazocal{N}_i}$ called {weights}, such that
\begin{equation} \label{eq:diffusive-coupling}
  v_i(t) = \sum_{j \in \pazocal{N}_i} a_{ij} (y_j(t) - y_i(t)) + \xi_i(t) ,
\end{equation}
where $a_{ij} > 0$ for $j \in \pazocal{N}_i \setminus \{i\}$ and $a_{ii} = 0$.
We also define the weighted degree as $d_i \coloneqq \sum_{j \in \pazocal{N}_i} a_{ij}$.

\subsection{Data collection \revised{and available information}}
\revised{ 
Each subsystem has the knowledge of the neighbor set $\pazocal{N}_i$, but not the interconnection weights $M_{ij}$.
Based on this knowledge, each subsystem collects two types of data independently, requiring no global coordination.
First, the local data is collected independently at each subsystem by measuring the local state, input, interconnection input, and output trajectories.
Second, the interconnection data is collected from the interconnection relation~\eqref{eq:interconnection-output-local-relation} by measuring the local interconnection input and the output of the neighboring subsystems 
in $\pazocal{N}_i$.
}

\subsubsection*{Local data}
We assume that we do {not} know the system matrices of each subsystem in \eqref{eq:dynamics-of-the-ith-subsystem-with-noise},
but that we can collect data trajectories $\{x_i(t)\}_{t=0}^{N_i}$, $\{u_i(t)\}_{t=0}^{N_i-1}$, $\{v_i(t)\}_{t=0}^{N_i-1}$ and $\{y_i(t)\}_{t=0}^{N_i-1}$ of lengths $N_i$.
We define the following matrices
\begin{equation} \label{eq:subsystem-data-matrices-definition}
	\begin{aligned}
		X_i &\coloneqq
		\begin{bmatrix}
			x_i(0) & x_i(1) & \cdots & x_i(N_i-1)
		\end{bmatrix},   \\
		X_i^+ &\coloneqq
		\begin{bmatrix}
			x_i(1) & x_i(2) & \cdots & x_i(N_i)
		\end{bmatrix}, \\
		U_i &\coloneqq \begin{bmatrix}
			u_i(0) & u_i(1) & \cdots & u_i(N_i-1)
		\end{bmatrix}, \\
		V_i &\coloneqq \begin{bmatrix}
			v_i(0) & v_i(1) & \cdots & v_i(N_i-1)
		\end{bmatrix}, \\
		Y_i &\coloneqq \begin{bmatrix}
			y_i(0) & y_i(1) & \cdots & y_i(N_i-1)
		\end{bmatrix}, \\
		W_i &\coloneqq \begin{bmatrix}
			w_{1,i}(0) & w_{1,i}(1) & \cdots & w_{1,i}(N_i-1) \\
      w_{2,i}(0) & w_{2,i}(1) & \cdots & w_{2,i}(N_i-1)
		\end{bmatrix}, \\
	\end{aligned}
\end{equation}
and denote the {local data} $\mathcal{D}_i \coloneqq \{ X_i, X_i^+, U_i, V_i, Y_i\}$.
We note that, while the first five matrices in \eqref{eq:subsystem-data-matrices-definition} can be constructed, the last matrix cannot, as it involves the unmeasured noise trajectories $\{w_{1,i}(t)\}_{t=0}^{N_i-1}$ and $\{w_{2,i}(t)\}_{t=0}^{N_i-1}$.
We assume, however, that the noise trajectories give rise to a matrix $W_i$ that satisfies a known QMI.
\begin{assumption} \label{assumption:noise-matrices-quadratic-bound}
	The matrix $W_i$ is unknown but satisfies
	\begin{equation} \label{eq:prob-formulation-quadratic-noise-bound}
		W_i^\top \in \pazocal{Z}_{N_i}(\Phi_i),
	\end{equation}
	with a known matrix $
	\Phi_i =
  \left[  
    \begin{smallmatrix}
      \Phi_{11,i} & \Phi_{12,i} \\
      \Phi_{12,i}^\top & \Phi_{22,i}
    \end{smallmatrix} 
  \right]
	\in \mathbf{\Pi}_{n_i+p_i, N_i}$.
\end{assumption}

Noise models based on QMIs similar to Assumption \ref{assumption:noise-matrices-quadratic-bound} are also used in works that consider data-driven dissipativity~\cite{berberich2020robust, koch2021provably, van2022data}, and can impose bounds on sequences, the maximal singular value of~$W_i$ and individual noise samples by an appropriate choice of the matrix~$\Phi_i$.
For instance, by setting $(\Phi_{11,i},\Phi_{12,i},\Phi_{22,i}) = (N_i \varepsilon I, 0, -I)$, $\pazocal{Z}_{N_i}(\Phi_i)$ contains all vectors such that~$\left\|  
\left[  
  \begin{smallmatrix}
    w_{1, i}(t) \\ w_{2, i}(t)
\end{smallmatrix}
\right]  
 \right\|_2^2 \leq \varepsilon$ is satisfied for all~$t = 0,\dots,N_i-1$.

\subsubsection*{Interconnection data}
For each subsystem, we also assume that the interconnection matrix $M$ is unknown for each subsystem but that $\pazocal{N}_i$ is known and the trajectories $\{v_i(t)\}_{t=0}^{\tilde{N}_{i} - 1}$ and $\{\tilde{y}_i(t)\}_{t=0}^{\tilde{N}_{i} - 1}$ of lengths $\tilde{N}_i$ can be collected from the interconnection relation \eqref{eq:interconnection-output-local-relation}.
We define the following matrices
\begin{equation*}
  \begin{aligned}
    \tilde{V}_i &\coloneqq 
  \begin{bmatrix}
    v_i(0) & \cdots & v_i(\tilde{N}_{i} - 1)
  \end{bmatrix}, \\
  \tilde{Y}_i &\coloneqq 
  \begin{bmatrix}
    \tilde{y}_i(0) & \cdots & \tilde{y}_i(\tilde{N}_{i} - 1)
  \end{bmatrix}, \\
  \Xi_i &\coloneqq 
  \begin{bmatrix}
    \xi_i(0) & \cdots & \xi_i(\tilde{N}_{i}-1)
  \end{bmatrix},
  \end{aligned}
\end{equation*}
and the  interconnection data $\tilde{\mathcal{D}}_i \coloneqq \{\tilde{V}_i,\tilde{Y}_i\}$.
While the matrix $\Xi_i$ cannot be constructed from data, we assume that it is bounded in the form of a QMI.

\begin{assumption} \label{assumption:interconnection-noise-matrices-QMI-local}
	The matrix 
  $
  \Xi_i
  $ satisfies
  \begin{equation} \label{eq:interconnection-matrix-relation}
    \Xi_i^\top \in \pazocal{Z}_{\tilde{N}_{i}} (\Psi_i),
  \end{equation}
	with a known matrix
  $\Psi_i =
  \left[
    \begin{smallmatrix}
      \Psi_{11,i} & \Psi_{12,i} \\
      \Psi_{12,i}^\top & \Psi_{22,i}
    \end{smallmatrix}
  \right]
   \in \mathbf{\Pi}_{p_i, \tilde{N}_{i}}$.
\end{assumption}

\subsection{Control objective} \label{subsec:control-objective}
Consider local state-feedback controllers for each subsystem \eqref{eq:dynamics-of-the-ith-subsystem-with-noise}, that is, $u_i = K_i x_i$, with $K_i \in \mathbb{R}^{m_i \times n_i}$. 
Using this controller, the closed-loop dynamics of the $i$-th subsystem can be written as 
\begin{equation} \label{eq:closed-loop-system-ith-with-noise}
	\begin{aligned}
		x_i(t+1) &= (A_i+B_{1, i} K_i) x_i(t) + B_{2,i} v_i(t) + w_{1,i}(t), \\
    y_i(t) &= (C_i+D_{1, i} K_i) x_i(t) + D_{2,i} v_i(t) + w_{2,i}(t),
	\end{aligned}
\end{equation}
leading to the global closed-loop dynamics
\begin{equation} \label{eq:global-dynamics-with-noise}
  \begin{aligned}
    x(t+1) &= (A+B_1K) x(t) + B_2 v(t) + w_1(t), \\
    y(t) &= (C+D_1K) x(t) + D_2 v(t) + w_2(t), \\ 
    v(t) &= M y(t) + \xi(t),
  \end{aligned}
\end{equation}
where
$x(t) = \begin{bmatrix}
  x_1(t)^\top & \cdots & x_k(t)^\top
\end{bmatrix}^\top \in \mathbb{R}^n$
(similarly for 
$v(t) \in \mathbb{R}^p$, $u(t) \in \mathbb{R}^m$, $y(t)  \in \mathbb{R}^p$, $w_1(t) \in \mathbb{R}^{n}$, $w_2(t) \in \mathbb{R}^{p}$ and $\xi(t) \in \mathbb{R}^{p}$),
and  
$A = \operatorname{diag}(A_1,\dots,A_k) \in \mathbb{R}^{n \times n}$
(similarly for 
\revised{$B_1 \in \mathbb{R}^{n \times m}$,
$B_2 \in \mathbb{R}^{n \times p}$},
$C \in \mathbb{R}^{p \times n}$,
$D_1 \in \mathbb{R}^{p \times m}$,
$D_2 \in \mathbb{R}^{p \times p}$ and
$K \in \mathbb{R}^{m \times n}$).
\revisedtwo{The} aim is to design local state-feedback controllers $u_i = K_i x_i$ for all $i\in\{1,\dots,k\}$ such that the nominal global closed-loop system
\begin{equation} \label{eq:global-dynamics}
  \begin{aligned}
    x(t+1) &= (A+B_1K) x(t) + B_2 v(t), \\
    y(t) &= (C+D_1K) x(t) + D_2 v(t), \\ 
    v(t) &= M y(t).
  \end{aligned}
\end{equation}
is asymptotically stable. 
From \cite{jiang2001input}, this ensures the input-to-state stability of the global closed-loop system \eqref{eq:global-dynamics-with-noise} with respect to the noise vectors $w_1(t)$, $w_2(t)$ and $\xi(t)$.
System \eqref{eq:global-dynamics} is illustrated in Figure~\ref{fig:global-closed-loop}.

\begin{figure}
	\begin{center}
		\resizebox{0.48\textwidth}{!}{
			\begin{tikzpicture}[auto, node distance=1cm,>=latex']
				\node [block, name=plant] (plant) {$\begin{aligned}
					x_1^+ &= A_1 x_1 + B_{1,1} u_1 + B_{2,1} v_1 , \\
					y_1   &= C_1 x_1 + D_{1,1} u_1 + D_{2,1} v_1
  \end{aligned}$};
				\node [block, below=1 of plant] (plant2) {$\begin{aligned}
					x_2^+ &= A_2 x_2 + B_{1,2} u_2 + B_{2,2} v_2 , \\
					y_2   &= C_2 x_2 + D_{1,2} u_2 + D_{2,2} v_2
  \end{aligned}$};
				\node [block, below=3 of plant] (plant3) {$\begin{aligned}
					x_k^+ &= A_k x_k + B_{1,k} u_k + B_{2,k} v_k , \\
					y_k   &= C_k x_k + D_{1,k} u_k + D_{2,k} v_k
  \end{aligned}$};
				\node [block, left= 1 of plant] (K1) {$u_1 = K_1x_1$};
				\node [block, left= 1 of plant2] (K2) {$u_2 = K_2x_2$};
				\node [block, left= 1 of plant3] (K3) {$u_k = K_kx_k$};
				\draw [dashed] (plant2) -- (plant3);
				\node [block, right =1 of plant2,minimum height=6cm] (network) {\begin{tabular}{c} \includegraphics[width=0.15\textwidth, trim = 3.2cm 0 0 1, clip]{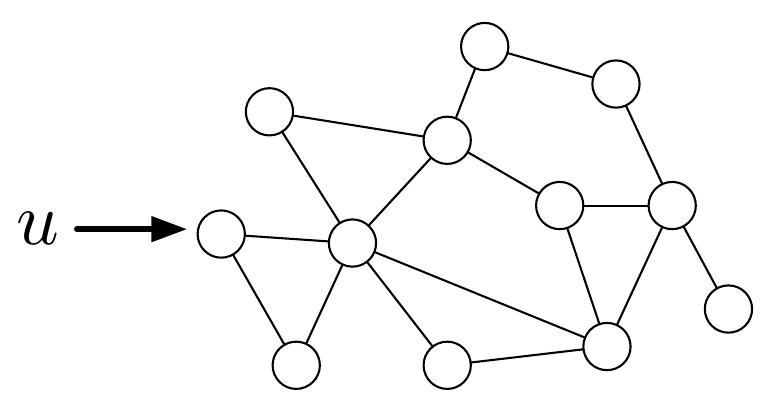}
						\\
						\vspace{1em}
						\\
						$v= M y$
				\end{tabular}};
				\draw [draw,->] ($(plant.east)+(0,0.2)$) -- node {$y_1$} ($(plant.east)+(1,0.2)$);
				\draw [draw,->] ($(plant.east)+(1,-0.2)$) -- node {$v_1$} ($(plant.east)+(0,-0.2)$);
				\draw [draw,->] ($(plant2.east)+(0,0.2)$) -- node {$y_2$} ($(plant2.east)+(1,0.2)$);
				\draw [draw,->] ($(plant2.east)+(1,-0.2)$) -- node {$v_2$} ($(plant2.east)+(0,-0.2)$);
				\draw [draw,->] ($(plant3.east)+(0,0.2)$) -- node {$y_k$} ($(plant3.east)+(1,0.2)$);
				\draw [draw,->] ($(plant3.east)+(1,-0.2)$) -- node {$v_k$} ($(plant3.east)+(0,-0.2)$);
				\draw [draw,->] ($(plant.west)-(0,0.2)$) -- node {$x_1$} ($(plant.west)-(1,0.2)$);
				\draw [draw,->] ($(plant.west)-(1,-0.2)$) -- node {$u_1$} ($(plant.west)-(0,-0.2)$);
				\draw [draw,->] ($(plant2.west)-(0,0.2)$) -- node {$x_2$} ($(plant2.west)-(1,0.2)$);
				\draw [draw,->] ($(plant2.west)-(1,-0.2)$) -- node {$u_2$} ($(plant2.west)-(0,-0.2)$);
				\draw [draw,->] ($(plant3.west)-(0,0.2)$) -- node {$x_k$} ($(plant3.west)-(1,0.2)$);
				\draw [draw,->] ($(plant3.west)-(1,-0.2)$) -- node {$u_k$} ($(plant3.west)-(0,-0.2)$);
		\end{tikzpicture}}
	\end{center}
	\caption{Illustration of the nominal global closed-loop system \eqref{eq:global-dynamics}.
	We aim at designing $\{K_i\}_{i = 1}^k$ such that \eqref{eq:global-dynamics} is asymptotically stable
	on the basis of data.} 	
	\label{fig:global-closed-loop}			
\end{figure}

\begin{problem} \label{prb:decentralized-unknown}
  Assume that for the $i$-th subsystem, where $i \in \{1,\dots, k\}$, 
	the data $\mathcal{D}_i$, $\tilde{\mathcal{D}}_i$ and the neighbor set $\pazocal{N}_i$ is available and Assumptions \ref{assumption:noise-matrices-quadratic-bound} and \ref{assumption:interconnection-noise-matrices-QMI-local} hold.
	Compute a state-feedback gain $K_i$
  such that the nominal global closed-loop system~\eqref{eq:global-dynamics} is asymptotically stable.
\end{problem}

In addition, we investigate systems under diffusive coupling. Here, we assume for simplicity that $p_i = 1$ for all $i \in \{1,\dots, k \}$; similar results can be obtained when the input and output dimensions are greater than one.
Problem \ref{prb:decentralized-unknown} then reduces to:
\begin{problem} \label{prb:diffusive-coupling}
  Assume that $p_i = 1$ for all $i \in \{1,\dots, k\}$
  and the interconnection structure is a diffusive coupling \eqref{eq:diffusive-coupling}.
	Assume that for the $i$-th subsystem, where $i \in \{1,\dots, k\}$, 
	the data $\mathcal{D}_i$, $\tilde{\mathcal{D}}_i$ and the neighbor set $\pazocal{N}_i$ is available and Assumptions \ref{assumption:noise-matrices-quadratic-bound} and \ref{assumption:interconnection-noise-matrices-QMI-local} hold.
	Compute a state-feedback gain $K_i$ such that the nominal global closed-loop system~\eqref{eq:global-dynamics} is asymptotically stable.
\end{problem}

Note that in Problems \ref{prb:decentralized-unknown} and \ref{prb:diffusive-coupling}, we consider both ``decentralized control'' in the sense that each controller of the system does not communicate with other controllers, and ``decentralized design'' in the sense that the control design does not require global data or information.

\section{Dissipativity-Based Data-Driven Decentralized Control} \label{sec:solution-to-problem-1}
To address Problem~\ref{prb:decentralized-unknown}, we adopt a two-step approach: we first develop a local dissipativity condition for each subsystem, then derive a decentralized stability condition based on the supply rates of each subsystem.
Both conditions are data-driven, decentralized, and described by LMIs, allowing us to combine them, leading to a pipeline for synthesizing data-driven decentralized controllers.

\subsection{Local Controller Design} \label{sec:local_controller_design}
We first derive a data-driven LMI condition to compute a state-feedback gain~$K_i$
that makes the $i$-th nominal closed-loop system
\begin{equation} \label{eq:closed-loop-system-local}
  \begin{aligned}
    x_i(t+1) &= ( A_i + B_{1,i} K_i ) x_i(t) + B_{2,i} v_i(t), \\
    y_i(t) &=  (C_i + D_{1,i} K_i ) x_i(t) + D_{2,i} v_i(t),
  \end{aligned}
\end{equation}
dissipative with respect to the supply rate
\begin{equation} \label{eq:supply-rate}
  \begin{aligned}
    s_i(v_i(t),y_i(t))=
    \begin{bmatrix}
      v_i(t) \\ y_i(t)  
    \end{bmatrix}^\top 
    \begin{bmatrix}
      H_i & G_i^\top \\
      G_i & F_i
    \end{bmatrix}^{-1}
    \begin{bmatrix}
      v_i(t) \\ y_i(t)  
    \end{bmatrix}.
  \end{aligned}
\end{equation}
Here, the matrices~${F}_i \in \mathbb{R}^{p_i \times p_i}$,~${G}_i \in \mathbb{R}^{p_i \times p_i}$ and~${H}_i \in \mathbb{R}^{p_i \times p_i}$ are such that
\begin{equation} \label{eq:inertia-condition}
  \operatorname*{In} \left( 
    \begin{bmatrix}
      H_i & G_i^\top \\
      G_i & F_i
    \end{bmatrix}^{-1}
  \right) 
  =
  (p_i, 0, p_i).
\end{equation}

Given the local data $\mathcal{D}_i = \{ X_i, X_i^+, U_i, V_i, Y_i\}$ constructed from the data trajectories in \eqref{eq:subsystem-data-matrices-definition}, and recalling that the trajectories are generated from the dynamics \eqref{eq:dynamics-of-the-ith-subsystem-with-noise}, we can show that
\begin{equation} \label{eq:local-data-matrices-relation}
    \begin{bmatrix}
      X_i^+ \\
      Y_i
    \end{bmatrix}
    = 
    \begin{bmatrix}
      A_i X_i + B_{1,i} U_i + B_{2,i} V_i \\
      C_i X_i + D_{1,i} U_i + D_{2,i} V_i
    \end{bmatrix}
    + W_i.
\end{equation}
Then, under the assumption on the noise matrix, the set of system matrices that are consistent with the local data is defined as
\begin{equation*} \label{eq:set-of-systems-consistent-with-data}
  \begin{split}
    \Sigma_i
  \coloneqq
  & \left\{ (A_i,B_{1,i},B_{2,i},C_i,D_{1,i},D_{2,i})
  \right. \\
  &
  \left.
  :\text{$\exists W_i$ s.t.~\eqref{eq:prob-formulation-quadratic-noise-bound} and \eqref{eq:local-data-matrices-relation} hold} \right\}.
  \end{split}
\end{equation*}

Since we cannot distinguish the true system matrices from the other matrices in $\Sigma_i$,
we aim to make system \eqref{eq:closed-loop-system-local} dissipative for all matrices in $\Sigma_i$, given the local data $\mathcal{D}_i$.
The following theorem provides a data-driven condition to design such a state-feedback controller.
\begin{theorem} \label{thm:local-controller-design}
  \revisedtwo{Consider the data $\mathcal{D}_i = \{ X_i, X_i^+, U_i, V_i, Y_i\}$ and assume that it satisfies Assumption~\ref{assumption:noise-matrices-quadratic-bound}.}
  Given the matrices $P_i>0,L_i,F_i,G_i,H_i$, define %
  \begin{equation} \label{eq:data-matrices-in-the-theorem}
    \begin{aligned}
      \hat{M}_i &\coloneqq 
    \left[\begin{array}{ccccc;{2pt/2pt}c}
      P_i & 0 & 0 & 0 & 0 & 0\\
      0 & - F_i & 0 & 0 & G_i & 0 \\
      0 & 0 & - P_i & - L_i^\top & 0 & 0\\ 
      0 & 0 & - L_i & 0 & 0 & L_i \\ 
      0 & G_i^\top & 0 & 0 & - H_i & 0 \\ \hdashline[2pt/2pt]
      0 & 0 & 0 & L_i^\top & 0  & P_i
    \end{array}
    \right],
      \\
    J_i &\coloneqq
    \left[
    \begin{array}{c;{2pt/2pt}c}
      I & 
      \begin{matrix}
        X_i^+ \\
      Y_i 
      \end{matrix}  \\ \hdashline[2pt/2pt]
      0 & \begin{matrix}
        - X_i \\
      - U_i \\
      - V_i
      \end{matrix}
    \end{array}
    \right]
    \Phi_i
    [\bullet]^\top, 
    \hat{N}_i \coloneqq 
    \left[
    \begin{array}{c;{2pt/2pt}c}
      J_i
      & 0 \\ \hdashline[2pt/2pt] 
      0 & 0_{n_i \times n_i}
    \end{array}
    \right],
  \end{aligned}
  \end{equation}\renewcommand{\arraystretch}{1}and assume that~$J_i$ has at least one positive eigenvalue.
  Then, there exists a state-feedback gain~$K_i$ such that for any matrices $(A_i,B_{1,i},B_{2,i},C_i,D_{1,i},D_{2,i}) \in \Sigma_i$, system~\eqref{eq:closed-loop-system-local} is dissipative with respect to the supply rate~\eqref{eq:supply-rate} that satisfies \eqref{eq:inertia-condition}, if and only if the LMI
  \begin{equation} \label{eq:local-controller-design-lmi}
    \hat{M}_i - \alpha_i \hat{N}_i \geq 0 
  \end{equation}
  with variables $P_i >0$, $L_i$ and $\alpha_i \geq 0$ holds.
  Moreover, if $P_i$ and $L_i$ satisfy \eqref{eq:local-controller-design-lmi}, then $K_i = L_i P_i^{-1}$ is such a state-feedback gain and $V_i(x_i) = x_i^\top P_i^{-1} x_i$ is the corresponding storage function.
\end{theorem}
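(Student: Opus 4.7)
The strategy is to encode both the data consistency set $\Sigma_i$ and the closed-loop dissipativity condition as QMIs in the transposed stacked system matrix $\mathcal{A}_i^\top$, where $\mathcal{A}_i$ is the $(n_i+p_i)\times(n_i+m_i+p_i)$ matrix whose two block rows are $(A_i,B_{1,i},B_{2,i})$ and $(C_i,D_{1,i},D_{2,i})$, and then close the loop via the matrix S-lemma. For the data side, the identity \eqref{eq:local-data-matrices-relation} rewrites as $W_i^\top = Z_i^\top - D_i^{\mathrm{data},\top}\,\mathcal{A}_i^\top$, with $Z_i$ the vertical stacking of $X_i^+,Y_i$ and $D_i^{\mathrm{data}}$ the stacking of $X_i,U_i,V_i$. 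Substituting into $W_i^\top \in \pazocal{Z}_{N_i}(\Phi_i)$ yields $\Sigma_i \leftrightarrow \mathcal{A}_i^\top \in \pazocal{Z}_{n_i+m_i+p_i}(J_i)$, since the outer factor in the definition of $J_i$ is precisely the linear map sending $[I;\mathcal{A}_i^\top]$ to $[I;W_i^\top]$.

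For the dissipativity side, I would introduce the standard state-feedback change of variables $L_i = K_i P_i$ with $P_i > 0$ playing the role of the inverse of the storage matrix. Applying Lemma~\ref{lemma:dissipativity-dualization} (whose hypotheses are met thanks to \eqref{eq:inertia-condition}) expresses closed-loop dissipativity through the dual dissipativity LMI, and a careful expansion recasts that LMI as $[I;\mathcal{A}_i^\top]^\top \bar M_i [I;\mathcal{A}_i^\top] \geq 0$, where $\bar M_i$ turns out to be precisely the Schur complement of the bottom-right $P_i$ block of $\hat M_i$. The Schur complement acts simply: the only coupling in $\hat M_i$ between the first five blocks and the trailing $n_i$ block is the off-diagonal $L_i$, so only the zero in position $(4,4)$ is modified, becoming $-L_i P_i^{-1} L_i^\top$. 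Since $\hat N_i$ has zeros in that same block row and column, and $P_i > 0$, the LMI \eqref{eq:local-controller-design-lmi} is equivalent to $\bar M_i - \alpha_i J_i \geq 0$.

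Finally, Proposition~\ref{prop:matrix-S-lemma} converts $\bar M_i - \alpha_i J_i \geq 0$ for some $\alpha_i \geq 0$ into the inclusion $\pazocal{Z}_{n_i+m_i+p_i}(J_i) \subseteq \pazocal{Z}_{n_i+m_i+p_i}(\bar M_i)$, using the positive-eigenvalue hypothesis on $J_i$ together with the claim that $J_i$ inherits the $\mathbf{\Pi}$-structure from $\Phi_i$, which one verifies from the explicit block form of the outer factor in the definition of $J_i$. Chaining everything together, this inclusion is exactly the statement that for every $\mathcal{A}_i \in \Sigma_i$ the closed-loop subsystem is dissipative with common storage function $V_i(x_i) = x_i^\top P_i^{-1} x_i$ and gain $K_i = L_i P_i^{-1}$. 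I anticipate the main technical obstacle to be the algebraic verification in the middle step: one must check that the cross-terms produced by $-L_i$, $-L_i^\top$ and the Schur-complement-generated $-L_i P_i^{-1} L_i^\top$ combine, after conjugation by $[I;\mathcal{A}_i^\top]$, to precisely complete the squares $(A_i + B_{1,i} K_i) P_i (A_i + B_{1,i} K_i)^\top$ and $(C_i + D_{1,i} K_i) P_i (C_i + D_{1,i} K_i)^\top$ that appear in the dual closed-loop dissipativity LMI.
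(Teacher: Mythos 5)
Your proposal is correct and follows essentially the same route as the paper: encode $\Sigma_i$ as the QMI $\mathcal{A}_i^\top \in \pazocal{Z}(J_i)$, express closed-loop dissipativity via the dual dissipativity LMI as a QMI in $\mathcal{A}_i^\top$ with the matrix $\bar M_i$, apply the matrix S-lemma, and recover \eqref{eq:local-controller-design-lmi} by taking the Schur complement with respect to the trailing $P_i$ block together with the substitution $L_i = K_i P_i$. Your observation that the Schur complement only modifies the $(4,4)$ block into $-L_iP_i^{-1}L_i^\top = -K_iP_iK_i^\top$, and that $\hat N_i$ is unaffected because its trailing block row and column are zero, is exactly the algebra the paper relies on.
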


\begin{proof}
  First, by substituting~\eqref{eq:local-data-matrices-relation} into~\eqref{eq:prob-formulation-quadratic-noise-bound} to eliminate~$W_i$
  and rearranging the resulting terms, we note that the QMI
  \renewcommand{\arraystretch}{1.15} %
  \begin{align} \label{eq:quadratic-bound-before-s-lemma}
    \setlength\arraycolsep{3pt}
    \left[
    \begin{array}{cc}
      I \\ \hdashline 
      \begin{matrix}
        A_i^\top & C_i^\top
      \end{matrix} \\
      \begin{matrix}
        B_{1,i}^\top & D_{1,i}^\top
      \end{matrix} \\
      \begin{matrix}
        B_{2,i}^\top & D_{2,i}^\top
      \end{matrix}
    \end{array}
    \right]^\top
    J_i 
    \left[
    \begin{array}{cc}
      I \\ \hdashline 
      \begin{matrix}
        A_i^\top & C_i^\top
      \end{matrix} \\
      \begin{matrix}
        B_{1,i}^\top & D_{1,i}^\top
      \end{matrix} \\
      \begin{matrix}
        B_{2,i}^\top & D_{2,i}^\top
      \end{matrix}
    \end{array}
    \right]
    &\geq 0 
  \end{align}
  \renewcommand{\arraystretch}{1}holds if and only if $(A_i,B_{1,i},B_{2,i},C_i,D_{1,i},D_{2,i}) \in \Sigma_i$.
  From Assumption \ref{assumption:noise-matrices-quadratic-bound}, \eqref{eq:data-matrices-in-the-theorem} and \eqref{eq:quadratic-bound-before-s-lemma}, one can show that
  $J_i \in \mathbf{\Pi}_{n_i+p_i, n_i+m_i+p_i}$.

  Next, we reformulate the dissipativity condition of system \eqref{eq:closed-loop-system-local} as a QMI.
  We define~$\hat{A}_i\coloneqq A_i+B_{1,i}K_i$ and $\hat{C}_i \coloneqq C_i + D_{1,i} K_i$.
  From \revised{Proposition \ref{prop:dual-dissipativity-lmi}},
  system~\eqref{eq:closed-loop-system-local}
  is dissipative with respect to
  the supply rate~\eqref{eq:supply-rate},
  if and only if 
  there exists~$P_i  > 0$ such that
    \begin{equation}
      \setlength\arraycolsep{1pt}
      \label{eq:dissipativity-inequality-before-arranging}
      \begin{aligned} 
        &
    \begin{bmatrix}
      I & 0 \\ 
      \hat{A}_i^\top & \hat{C}_i^\top
    \end{bmatrix}^\top 
    \begin{bmatrix}
      P_i & 0 \\
      0 & - P_i
    \end{bmatrix}
    \begin{bmatrix}
      I & 0 \\ 
      \hat{A}_i^\top & \hat{C}_i^\top
    \end{bmatrix}
    \\+ &
    \begin{bmatrix}
      0 & I \\ 
      B_{2,i}^\top & D_{2,i}^\top 
    \end{bmatrix}^\top 
    \begin{bmatrix}
      0 & - I \\
      I & 0 
    \end{bmatrix}
    \begin{bmatrix}
      H_i & G_i^\top \\
      G_i & F_i
    \end{bmatrix}
    \begin{bmatrix}
      0 & - I \\
      I & 0
    \end{bmatrix}
    \begin{bmatrix}
      0 & I \\ 
      B_{2,i}^\top & D_{2,i}^\top 
    \end{bmatrix} 
    \geq 0 .
  \end{aligned}
    \end{equation}
  Rearranging \eqref{eq:dissipativity-inequality-before-arranging} leads to the QMI
  \revised{
  \begin{small}
  \renewcommand{\arraystretch}{1.15} %
  \begin{align} \label{eq:dissipativity-inequality-before-s-lemma}
    \setlength\arraycolsep{3pt}
    [\bullet]^\top 
    \bar{M}_i
    \left[
    \begin{array}{cc}
      I \\ \hdashline 
      \begin{matrix}
        A_i^\top & C_i^\top
      \end{matrix} \\
      \begin{matrix}
        B_{1,i}^\top & D_{1,i}^\top
      \end{matrix} \\
      \begin{matrix}
        B_{2,i}^\top & D_{2,i}^\top
      \end{matrix}
    \end{array}
    \right] \geq 0 ,
  \end{align}
  \renewcommand{\arraystretch}{1}\end{small} %
  where we define $\bar{M_i}$ as 
  \begin{equation*}
    \bar{M_i} \coloneqq 
    \left[
    \begin{array}{cc;{3pt/3pt}ccc}
      P_i & 0 & 0 & 0 & 0 \\
      0 & - F_i & 0 & 0 & G_i \\ \hdashline
      0 & 0 & - P_i & - P_i K_i^\top & 0 \\ 
      0 & 0 & - K_i P_i & - K_i P_i K_i^\top & 0 \\ 
      0 & G_i^\top & 0 & 0 & - H_i
    \end{array}
    \right] .
  \end{equation*}}

  System \eqref{eq:closed-loop-system-local} is dissipative for all matrices $(A_i,B_{1,i},B_{2,i},C_i,D_{1,i},D_{2,i}) \in \Sigma_i$, if and only if \eqref{eq:dissipativity-inequality-before-s-lemma} is satisfied
  for all~$(A_i,B_{1,i},B_{2,i},C_i,D_{1,i},D_{2,i})$ satisfying~\eqref{eq:quadratic-bound-before-s-lemma}.
  By applying Proposition~\ref{prop:matrix-S-lemma},
  this is the case if and only if there exist~$\alpha_i \geq 0$ and~$P_i  > 0$
  such that 
  \begin{equation} \label{eq:after-matrix-s-lemma}
    \begin{aligned}
    \bar{M}_i
    - \alpha_i
    J_i  \geq 0.
  \end{aligned}
  \end{equation}
  Applying the Schur complement \revised{lemma}~\cite{horn2012matrix}
  and considering the change of variables~$L_i \coloneqq K_i P_i$, 
  the feasibility of \eqref{eq:after-matrix-s-lemma} is equivalent to that of \eqref{eq:local-controller-design-lmi}.
  Finally, from the change of variables, the state-feedback gain can be retrieved by~$K_i = L_i P_i^{-1}$ and the storage function can be constructed as~\revised{$V_i(x_i) = x_i^\top P_i^{-1} x_i$},
  which concludes the proof.
\end{proof}

Theorem \ref{thm:local-controller-design} provides a data-driven method to design a dissipative controller.
Given the data $\mathcal{D}_i$ and the matrix $\Phi_i$ from Assumption \ref{assumption:noise-matrices-quadratic-bound}, the matrix $J_i$ can be computed from \eqref{eq:data-matrices-in-the-theorem}.
Then, by constructing matrices $\hat{M}_i$ and $\hat{N}_i$ with the decision variables $P_i,L_i$ and $\alpha_i$, the inequality \eqref{eq:local-controller-design-lmi} can be formed.
By solving \eqref{eq:local-controller-design-lmi}, a state-feedback gain $K_i$ that makes system \eqref{eq:closed-loop-system-local} dissipative can be computed.
Inequality \eqref{eq:local-controller-design-lmi} is an LMI and can be solved efficiently by semidefinite programming methods \cite[Chapter 4.6.2]{boyd2004convex}.

\revised{
  \subsubsection*{Conservatism of Theorem~\ref{thm:local-controller-design}}
  Theorem~\ref{thm:local-controller-design} gives a necessary and sufficient condition for the dissipative state-feedback design problem, when the supply rate matrices satisfy the inertia condition~\eqref{eq:inertia-condition}.
  However, conservatism might arise from a mismatch between the QMI noise characterization in Assumption \ref{assumption:noise-matrices-quadratic-bound} and the actual noise realization.
  In practice, the noise bound should be chosen as tightly as possible based on the prior knowledge of the noise level.
}

\subsubsection*{Implicit conditions on the data matrices}
The feasibility of the LMI \eqref{eq:local-controller-design-lmi} implicitly requires conditions on the data $\mathcal{D}_i$ \revised{including a necessary condition on the data length}.
  For example, inequality \eqref{eq:local-controller-design-lmi} is satisfied only if the inequality
  \begin{equation*}
      \left[\begin{array}{cccccc}
        - P_i & - L_i^\top & 0 \\ 
        - L_i & 0 & 0 \\ 
        0 & 0 & - H_i \\ 
      \end{array}
      \right]
      - 
      \alpha_i 
      \left[
      \begin{array}{c}
        X_i \\
        U_i \\
        V_i
      \end{array}
      \right]
      \Phi_{22, i}\left[
        \begin{array}{c}
          X_i \\
          U_i \\
          V_i
        \end{array}
        \right]^\top
        \geq 0
  \end{equation*}
  holds.
  This is satisfied for some $P_i > 0$ only if $\alpha_i > 0$ and $X_i$ has full row rank, which can be interpreted as an ``excitation'' condition of the data matrix $X_i$.
  \revised{This also imposes a necessary condition on the data length $N_i \geq n_i$.}
  Similar conditions can also be derived for other data matrices.

\subsubsection*{Positive eigenvalue assumption}
We note that the positive eigenvalue assumption on $J_i$ in Theorem \ref{thm:local-controller-design} is not restrictive. 
First, under the additional assumption that the matrix $
\begin{bmatrix}
  X_i^\top & U_i^\top & V_i^\top
\end{bmatrix}^\top$ has full row rank, the positive eigenvalue assumption on $J_i$ is satisfied by an appropriate choice of $\Phi_i$. 
Specifically, for $J_i$ that does not have any positive eigenvalues, we can replace $\Phi_{11,i}$ with $\Phi_{11,i} + \varepsilon I$ with an arbitrarily small $\varepsilon > 0$ and show that the resulting $J_i$ has at least one positive eigenvalue.
This can be interpreted as increasing the noise bound in Assumption \ref{assumption:noise-matrices-quadratic-bound} by an arbitrarily small amount.
Second, even when $J_i$ does not have any positive eigenvalues, the feasibility of the LMI \eqref{eq:local-controller-design-lmi} is sufficient for the design of a dissipativity-inducing controller, which can be deduced by considering only the sufficiency of the matrix S-lemma (Proposition~\ref{prop:matrix-S-lemma}).

\revised{\subsubsection*{Comparison with other works}
Theorem~\ref{thm:local-controller-design} shares technical similarities with other works on data-driven dissipativity based on state-feedback control~\cite{nguyen2024synthesis,kristovic2024data,tanaka2024algebraic}.
However, we assume all system matrices are unknown, whereas other works assume partial knowledge of them.
Moreover, we treat $(F_i, G_i, H_i)$ as decision variables, which enables us to link the results to the decentralized control design method.
This perspective is absent in other works, which consider individual systems only. 
In this sense, Theorem~\ref{thm:local-controller-design} is a necessary component of the overall decentralized synthesis framework rather than a standalone contribution.}

\revisedtwo{
  \subsubsection*{Relation to dissipativity indices}
  Dissipativity indices~\cite{brogliato2007dissipative, zhu2014passivity} are scalar quantities that quantify how dissipative a system is with respect to a given supply rate structure.
  Since the framework treats $(F_i, G_i, H_i)$ as free matrix decision variables, it is in principle more general than any dissipativity index parametrization.
  Moreover, given the optimized $(F_i, G_i, H_i)$, the dissipativity indices consistent with the resulting supply rate and the data can be computed post-hoc by solving another LMI.
  The indices obtained from this procedure are tight, since Theorem~\ref{thm:local-controller-design} is necessary and sufficient with respect to the data.
}

\subsection{Data-driven decentralized stability condition} \label{subsec:data-driven-decen-control-unknown-systems}
Next, we derive a data-driven decentralized stability condition.
\revisedtwo{The} starting point is the following proposition from \cite[Proposition 2]{martinelli2023interconnection}.

\begin{prop} \label{prop:global-stability-condition}
  Assume that the nominal closed-loop dynamics of the $i$-th subsystem \eqref{eq:closed-loop-system-local} with a controller $u_i = K_i x_i$ are dissipative with respect to the supply rate~$s(v_i(t),y_i(t)) = 
  \left[
    \begin{smallmatrix}
      v_i(t) \\ y_i(t)  
    \end{smallmatrix}
  \right]^\top 
\left[
  \begin{smallmatrix}
    H_i & G_i^\top \\
    G_i & F_i
\end{smallmatrix}
\right]^{-1}
\left[
  \begin{smallmatrix}
    v_i(t) \\ y_i(t)  
  \end{smallmatrix}
\right]
$
for all $i \in \{1,\dots, k\}$,
where the matrix $
\left[
  \begin{smallmatrix}
    H_i & G_i^\top \\
    G_i & F_i
  \end{smallmatrix}
\right]
$ is invertible.
Define $F \coloneqq \operatorname*{diag}(F_1,\dots,F_k)$, 
  $G \coloneqq \operatorname*{diag}(G_1,\dots,G_k)$, and 
  $H \coloneqq \operatorname*{diag}(H_1,\dots,H_k)$.
The origin of the nominal global closed-loop system \eqref{eq:global-dynamics} is asymptotically stable if
\begin{subequations} \label{eq:global-stability-condition}
  \begin{eqnarray}
    MFM^\top - MG - G^\top M^\top + H > 0, \label{eq:global-stability-condition-M} \\ 
    F \leq 0.\label{eq:global-stability-condition-F} 
  \end{eqnarray}
\end{subequations}
\end{prop}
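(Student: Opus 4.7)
The plan is to construct a global Lyapunov function from the local storage functions, sum the dissipativity inequalities, impose the global interconnection to eliminate $v$, and use the matrix conditions \eqref{eq:global-stability-condition-M}--\eqref{eq:global-stability-condition-F} together with Proposition~\ref{prop:dualization-lemma} to conclude that the resulting decrement bound is non-positive. Let $V_i(x_i) = x_i^\top P_i x_i$, with $P_i > 0$, be the storage function certifying dissipativity of the $i$-th closed-loop subsystem \eqref{eq:closed-loop-system-local}, and define $V(x) = \sum_{i=1}^{k} V_i(x_i) = x^\top \operatorname{diag}(P_1,\ldots,P_k)\, x$, which is positive definite in $x$ and our candidate global Lyapunov function.

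Summing the local dissipativity inequalities and re-stacking into the global vectors $v$ and $y$ so that $H$, $G$, and $F$ are block-diagonal, one obtains
\begin{equation*}
V(x(t+1)) - V(x(t)) \;\leq\; \begin{bmatrix} v \\ y \end{bmatrix}^{\!\top} \begin{bmatrix} H & G^\top \\ G & F \end{bmatrix} \begin{bmatrix} v \\ y \end{bmatrix}.
\end{equation*}
Substituting the interconnection $v = My$ from \eqref{eq:global-dynamics} and using $M = M^\top$, the right-hand side collapses to $y^\top Q\, y$ with $Q := MHM + MG^\top + GM + F$. Once $Q \leq 0$ is established, asymptotic stability of the origin of \eqref{eq:global-dynamics} will follow by a standard Lyapunov/LaSalle argument on the positive-definite and non-increasing $V$.

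The remaining, and main, technical step is therefore to extract $Q \leq 0$ from the two hypotheses. Writing $\bar\Pi := \begin{bmatrix} H & G^\top \\ G & F \end{bmatrix}$, which is invertible as a block-diagonal assembly of invertible $\begin{bmatrix} H_i & G_i^\top \\ G_i & F_i \end{bmatrix}$, the hypotheses read $\begin{bmatrix} I \\ -M \end{bmatrix}^{\!\top}\! \bar\Pi \begin{bmatrix} I \\ -M \end{bmatrix} > 0$ and $\begin{bmatrix} 0 \\ I \end{bmatrix}^{\!\top}\! \bar\Pi \begin{bmatrix} 0 \\ I \end{bmatrix} \leq 0$. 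Together with invertibility, these force $\operatorname*{In}(\bar\Pi) = (p,0,p)$ by a dimension/inertia count. I would then invoke Proposition~\ref{prop:dualization-lemma} applied to $\bar\Pi$ with $q = r = p$ to swap the strict QMI stated along $\begin{bmatrix} I \\ -M \end{bmatrix}$ into a dual QMI along $\begin{bmatrix} M \\ I \end{bmatrix}$, and combine this with a congruence transformation by $T = \begin{bmatrix} I & M \\ -M & I \end{bmatrix}$ (invertible since $T^\top T = (I + M^2) \oplus (I + M^2) > 0$), under which $T^\top \bar\Pi T$ acquires top-left block equal to the left-hand side of \eqref{eq:global-stability-condition-M} and bottom-right block equal to $Q$. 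A Schur-complement/inertia analysis of this bordered form then yields $Q \leq 0$, which is the main technical hurdle; the Lyapunov conclusion on \eqref{eq:global-dynamics} is then routine.
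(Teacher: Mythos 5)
Your setup (summing the storage functions, stacking the supply rates, substituting $v=My$) and your inertia count $\operatorname*{In}(\bar\Pi)=(p,0,p)$ from \eqref{eq:global-stability-condition-M}, $F\leq 0$ and invertibility are fine. The proof breaks at the decisive step. The congruence you propose gives
$T^\top\bar\Pi T=\left[\begin{smallmatrix} R & S\\ S^\top & Q\end{smallmatrix}\right]$
with $R=MFM^\top-MG-G^\top M^\top+H>0$, $Q=MHM+MG^\top+GM+F$, and off-diagonal block $S=HM-MGM+G^\top-MF$, which is \emph{not} zero in general. Sylvester's law plus the Haynsworth inertia-additivity formula then yield $\operatorname*{In}(Q-S^\top R^{-1}S)=(p,0,0)$, i.e.\ $Q<S^\top R^{-1}S$; since $S^\top R^{-1}S\geq 0$, this gives no sign information on $Q$. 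Worse, the inequality $Q\leq 0$ you are trying to reach is simply false under the stated hypotheses: in the scalar case $H=10$, $G=-1$, $F=0$, $M=1$, the matrix $\bar\Pi$ is invertible, $MFM^\top-MG-G^\top M^\top+H=12>0$ and $F\leq 0$, yet $Q=8>0$. No Schur-complement or inertia manipulation can close this argument.

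The quantity that the hypotheses actually control is $\left[\begin{smallmatrix} M\\ I\end{smallmatrix}\right]^\top\bar\Pi^{-1}\left[\begin{smallmatrix} M\\ I\end{smallmatrix}\right]$, not $\left[\begin{smallmatrix} M\\ I\end{smallmatrix}\right]^\top\bar\Pi\left[\begin{smallmatrix} M\\ I\end{smallmatrix}\right]$: once $\operatorname*{In}(\bar\Pi)=(p,0,p)$ is established, Proposition~\ref{prop:dualization-lemma} converts the strict QMI of $\bar\Pi$ along $\left[\begin{smallmatrix} I\\ -M\end{smallmatrix}\right]$ into $\left[\begin{smallmatrix} M\\ I\end{smallmatrix}\right]^\top\bar\Pi^{-1}\left[\begin{smallmatrix} M\\ I\end{smallmatrix}\right]<0$. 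This matches the Lyapunov decrement only if the supply rate is $\left[\begin{smallmatrix} H_i & G_i^\top\\ G_i & F_i\end{smallmatrix}\right]^{-1}$ --- which is how the supply rate is written everywhere else in the paper (see \eqref{eq:supply-rate}, Proposition~\ref{prop:decentralized-stability-conditions}, and the discussion following Theorem~\ref{thm:decentralized-unknown}) and in the cited source for this proposition; the statement here omits the inverse. With that reading, your own ingredients (the inertia count plus Proposition~\ref{prop:dualization-lemma}) give $V(x(t+1))-V(x(t))\leq y(t)^\top\left[\begin{smallmatrix} M\\ I\end{smallmatrix}\right]^\top\bar\Pi^{-1}\left[\begin{smallmatrix} M\\ I\end{smallmatrix}\right]y(t)\leq 0$ directly, and the congruence by $T$ is unnecessary. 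Note finally that even then the concluding step is not entirely ``routine'': the decrement is negative definite in $y$, not in $x$, so an additional detectability/LaSalle argument relating $y\equiv 0$ to $x\to 0$ along closed-loop trajectories is still needed.
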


Condition \eqref{eq:global-stability-condition} in Proposition \ref{prop:global-stability-condition} can be used as a sufficient condition to verify whether the nominal global closed-loop system \eqref{eq:global-dynamics} is asymptotically stable, given the supply rate of each subsystem.
For example, if all systems are passive, i.e., if $(F_i,G_i,H_i) = (0, \frac{1}{2}I, 0)$ for all $i \in \{1,\dots,k\}$, condition \eqref{eq:global-stability-condition} reduces to $M+M^\top < 0$, which is a well-known stability condition for the interconnection of passive systems.

Condition \eqref{eq:global-stability-condition} requires the full information of the interconnection matrix $M$ as well as the supply rate matrices from all subsystems, making it unsuitable for data-driven and decentralized control. Hence, we first derive a sufficient condition characterized in terms of local information of the interconnection, in the form of the matrix $\tilde{M}_i^\mathrm{r}$.

\begin{lemma} \label{lemma:decentralized-known}
  Define the matrices $E_i$ and $\Lambda_i$ as
  \begin{subequations} \label{eq:construction-lambda_i}
    \begin{align}
    E_i &\coloneqq 
  \begin{bmatrix}
     I_{p_i} & 0_{p_i \times p_{\sigma(1)}} & \cdots & 0_{p_i \times p_{\sigma(|\pazocal{N}_i|-1)}}
  \end{bmatrix}, \label{eq:E_i} \\
    \Lambda_i &\coloneqq  
  \begin{bmatrix} 
    \bullet
  \end{bmatrix}^\top \begin{bmatrix} H_i -\beta_i I & G_i^\top \\ G_i & F_i \end{bmatrix} \begin{bmatrix} E_i &0 \\ 0 & I_{p_i} \end{bmatrix} . \label{eq:lambda_i}
  \end{align}
  \end{subequations}
  Condition \eqref{eq:global-stability-condition} is satisfied if,
  for all $i\in\{1,\dots,k\}$,
  there exists $\beta_i > 0$ such that
  \begin{subequations} \label{eq:decentralized-lmi-condition-stability}
    \setlength\arraycolsep{1pt}
        \begin{align}
          \begin{bmatrix}
            I_{\tilde{p}_i} \\ 
            \tilde{M}_i^\mathrm{r}
          \end{bmatrix}^\top 
          \Lambda_i
          \begin{bmatrix}
            I_{\tilde{p}_i} \\
            \tilde{M}_i^\mathrm{r}
          \end{bmatrix} 
          &\geq 0 , \label{eq:decentralized-stab-cond-with-Mr} \\
          F_i &\leq 0 . \label{eq:F_i-decentralized-condition}
        \end{align}
      \end{subequations}
\end{lemma}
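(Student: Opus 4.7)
The plan is to derive the global condition~\eqref{eq:global-stability-condition} from the set of local conditions~\eqref{eq:decentralized-lmi-condition-stability} by ``lifting'' each local LMI into the global output coordinates, summing over subsystems, and exploiting the block-diagonal structure of $F$, $G$, and $H$. The slack parameters $\beta_i$ will play the role of strict-inequality reserves that turn the aggregated semidefinite inequality into the strict one required by~\eqref{eq:global-stability-condition-M}.

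First, I would introduce two auxiliary selection matrices: $\tilde{E}_i \in \mathbb{R}^{\tilde{p}_i \times p}$ such that $\tilde{y}_i = \tilde{E}_i y$, and $\hat{E}_i \coloneqq E_i \tilde{E}_i \in \mathbb{R}^{p_i \times p}$, which then satisfies $y_i = \hat{E}_i y$. Consistency between the global relation $v = My$ and the local expression $v_i = \tilde{M}_i^{\mathrm{r}} \tilde{y}_i$ implies that the $i$-th block row of $M$ equals $M_i \coloneqq \tilde{M}_i^{\mathrm{r}} \tilde{E}_i$. Expanding $\Lambda_i$ explicitly,
\begin{equation*}
\Lambda_i = \begin{bmatrix} E_i^\top (H_i - \beta_i I) E_i & E_i^\top G_i^\top \\ G_i E_i & F_i \end{bmatrix},
\end{equation*}
condition~\eqref{eq:decentralized-stab-cond-with-Mr} becomes
\begin{equation*}
E_i^\top(H_i - \beta_i I)E_i + E_i^\top G_i^\top \tilde{M}_i^{\mathrm{r}} + (\tilde{M}_i^{\mathrm{r}})^\top G_i E_i + (\tilde{M}_i^{\mathrm{r}})^\top F_i \tilde{M}_i^{\mathrm{r}} \geq 0.
\end{equation*}

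Next, I would lift this to the global output space by a congruence with $\tilde{E}_i$: left-multiply by $\tilde{E}_i^\top$ and right-multiply by $\tilde{E}_i$, which preserves positive semidefiniteness. Using $E_i \tilde{E}_i = \hat{E}_i$ and $\tilde{M}_i^{\mathrm{r}} \tilde{E}_i = M_i$, the inequality is transported to a relation among $p \times p$ matrices. Summing over $i = 1,\dots,k$ and using the block-diagonal structure of $F, G, H$ collapses each contribution: $\sum_i \hat{E}_i^\top H_i \hat{E}_i = H$, $\sum_i M_i^\top F_i M_i = M^\top F M$, and $\sum_i \hat{E}_i^\top G_i^\top M_i = G^\top M$ (with the symmetric counterpart for the other cross term). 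The slack terms aggregate into the block-diagonal matrix $B \coloneqq \operatorname{diag}(\beta_1 I_{p_1},\dots,\beta_k I_{p_k}) > 0$. After invoking $M = M^\top$ and rearranging, one obtains an inequality of the form $MFM^\top - MG - G^\top M^\top + H \geq B > 0$, which is precisely~\eqref{eq:global-stability-condition-M}. Condition~\eqref{eq:global-stability-condition-F} is immediate since $F = \operatorname{diag}(F_i) \leq 0$ whenever each $F_i \leq 0$.

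The main obstacle I expect is the bookkeeping required to assemble the $k$ local inequalities into the global block-matrix expression: one must set up $\tilde{E}_i$ so that the congruence correctly converts the local objects $E_i$ and $\tilde{M}_i^{\mathrm{r}}$ into their global analogues $\hat{E}_i$ and the $i$-th block row of $M$, and one must verify that the block-diagonal structures of $F, G, H$ make the summation reassemble cleanly, with careful attention to the symmetry $M = M^\top$ used in the last rearrangement. The $\beta_i$-slack is essential throughout: without it the aggregation would only yield a non-strict inequality, whereas the global condition~\eqref{eq:global-stability-condition-M} is strict.
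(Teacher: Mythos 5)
Your proof is essentially the paper's own argument run in the opposite direction: the paper starts from \eqref{eq:global-stability-condition-M}, uses $M=M^\top$ and the block-diagonality of $F,G,H$ to split it into $k$ per-subsystem terms, assigns each term a margin $\beta_i I$ so that the sum of semidefinite contributions becomes strictly positive, and then shrinks each term to the neighbor coordinates using $M_{ij}=0$ for $j\notin\pazocal{N}_i$ — exactly your lift-and-sum computation with the congruence by $\tilde{E}_i$ made explicit. One caveat: expanding $\Lambda_i$ as defined and summing literally yields $MFM^\top + MG + G^\top M^\top + H \geq \operatorname{diag}(\beta_i I) $, i.e.\ the cross terms come out with the opposite sign to \eqref{eq:global-stability-condition-M}, and no rearrangement fixes this (one must replace $G_i$ by $-G_i$ somewhere); your ``after rearranging'' glosses over this, though the paper's own proof exhibits the same sign discrepancy between \eqref{eq:M_i-decentralized-condition} and the final reduced inequality, so this appears to be an inconsistency inherited from the source rather than a flaw in your reasoning.
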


\begin{proof}
First, \eqref{eq:global-stability-condition-F} is satisfied if and only if for each $i\in\{1,\dots,k\}$, the inequality \eqref{eq:F_i-decentralized-condition} is satisfied.
Next, we define 
\begin{equation*}
  M^\mathrm{r}_i 
    \coloneqq 
  \begin{bmatrix}
    M_{i1} & \cdots & M_{ik}
  \end{bmatrix} 
  \in \mathbb{R}^{p_i \times p}.
\end{equation*}
Then, from $M= M^\top$, \eqref{eq:global-stability-condition-M} can be written as 
  \begin{equation} \label{eq:global-stability-condition-decomposed}
    \begin{aligned}
      \left( \sum_{i = 1}^k (M^\mathrm{r}_i)^\top F_i M^\mathrm{r}_i\right)
    - \begin{bmatrix}
      (M^\mathrm{r}_1)^\top G_1 & \cdots & (M^\mathrm{r}_k)^\top G_k
    \end{bmatrix} \\
    - \begin{bmatrix}
      G_1^\top M^\mathrm{r}_1  \\
      \vdots \\
      G_k^\top M^\mathrm{r}_k
    \end{bmatrix}
    + \operatorname*{diag}(H_1,\dots,H_k) > 0 .
    \end{aligned}
  \end{equation}
  By considering the positive definiteness of the $i$-th component,
  inequality \eqref{eq:global-stability-condition-decomposed} is satisfied
  if for each $i\in\{1,\dots,k\}$, there exists $\beta_i>0$ such that
  \begin{equation} \label{eq:M_i-decentralized-condition}
    \begin{split}
    (M^\mathrm{r}_i)^\top F_i M^\mathrm{r}_i
  -\begin{bmatrix} 
    0 & \cdots & (M^\mathrm{r}_i)^\top G_i & \cdots & 0
  \end{bmatrix} \\
  -
  \begin{bmatrix} 
    0 \\ 
    \vdots \\
    G_i^\top M^\mathrm{r}_i \\
    \vdots \\
    0
  \end{bmatrix}
  + \operatorname*{diag}(0,\dots,H_i - \beta_i I,\dots,0)
  \geq  0
  \end{split} 
  \end{equation}
  is satisfied.
  Since $M_{ij} = 0$ holds for all $j \notin \pazocal{N}_i $, the size of \eqref{eq:M_i-decentralized-condition} can be reduced and it is equivalent to 
  \begin{equation*}
    \begin{bmatrix} E_i^\top & (\tilde{M}^\mathrm{r}_i)^\top \end{bmatrix} \begin{bmatrix} H_i -\beta_i I & G_i^\top \\ G_i & F_i \end{bmatrix} \begin{bmatrix} E_i \\ \tilde{M}^\mathrm{r}_i \end{bmatrix} \geq 0
  \end{equation*}
  and thus to \eqref{eq:decentralized-stab-cond-with-Mr}.
\end{proof}

We next propose a data-driven decentralized stability condition for the global system.
From the assumptions on the measurements~\eqref{eq:interconnection-output-local-relation}, the matrices satisfy 
\begin{equation} \label{eq:data-relation-int}
  \tilde{V}_i = \tilde{M}^\mathrm{r}_i \tilde{Y}_i + \Xi_i .
\end{equation}
Then, the set of matrices $\tilde{M}_i^\mathrm{r}$ that are {consistent with the data} is defined as
\begin{equation*}
  \Sigma_{\tilde{M}^\mathrm{r}_i }
  \coloneqq
  \left\{ \tilde{M}^\mathrm{r}_i  :\text{$\exists \Xi_i $ s.t.~\eqref{eq:interconnection-matrix-relation},~\eqref{eq:data-relation-int} hold} \right\}.
\end{equation*}
Condition \eqref{eq:decentralized-stab-cond-with-Mr} cannot be directly used since $\tilde{M}_i^\mathrm{r}$ is unknown.
Hence, we propose a data-driven condition which is feasible if and only if the stability condition \eqref{eq:decentralized-stab-cond-with-Mr} is feasible for all matrices $\tilde{M}_i^\mathrm{r}$ consistent with data.

\begin{theorem} \label{thm:decentralized-unknown}
  \revised{Consider the data $\tilde{\mathcal{D}}_i = \{\tilde{V}_i,\tilde{Y}_i\}$ and assume that it satisfies Assumption~\ref{assumption:interconnection-noise-matrices-QMI-local}.}
  Define the matrix $\Theta_i$ as 
    \begin{equation} \label{eq:def-xi-i}
      \Theta_i \coloneqq  
    \begin{bmatrix} I_{p_i} & \tilde{V}_i \\ 
      0 & - \tilde{Y}_i
      \end{bmatrix}
      \Psi_i
    \begin{bmatrix} I_{p_i} & \tilde{V}_i \\ 
      0 & - \tilde{Y}_i
      \end{bmatrix}^\top,
  \end{equation}
  and assume that it is invertible.
  Define the matrix $\hat{\Theta}_i$ as 
  \begin{equation} \label{eq:def-xi-hat-i}
    \hat{\Theta}_i
    \coloneqq 
    \begin{bmatrix}
      0 & - I_{\tilde{p}_i} \\
      I_{p_i} & 0
    \end{bmatrix}
    \Theta_i^{-1}
    \begin{bmatrix}
      0 & - I_{p_i} \\
      I_{\tilde{p}_i} & 0
    \end{bmatrix}.
  \end{equation}
  Then, the stability condition \eqref{eq:decentralized-stab-cond-with-Mr} holds for all $\tilde{M}^\mathrm{r}_i \in \Sigma_{\tilde{M}^\mathrm{r}_i }$,
  if and only if there exists $\tau_i \geq 0$ such that 
  \begin{equation} \label{eq:local-lmi-condition-for-data-driven-stability}
    \Lambda_i - \tau_i \hat{\Theta}_i\geq 0 .
  \end{equation}
\end{theorem}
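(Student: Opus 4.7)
The plan is to rewrite the data-consistency condition on $\tilde{M}^\mathrm{r}_i$ as a QMI, dualize it using Proposition~\ref{prop:dualization-lemma}, and then apply the matrix S-lemma (Proposition~\ref{prop:matrix-S-lemma}) to obtain~\eqref{eq:local-lmi-condition-for-data-driven-stability}.

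First, I would eliminate~$\Xi_i$ by substituting the data relation~\eqref{eq:data-relation-int}, namely $\Xi_i = \tilde{V}_i - \tilde{M}^\mathrm{r}_i \tilde{Y}_i$, into the noise QMI~\eqref{eq:interconnection-matrix-relation}. The factorization
\begin{equation*}
\begin{bmatrix} I_{p_i} \\ \Xi_i^\top \end{bmatrix}
= \begin{bmatrix} I_{p_i} & 0 \\ \tilde{V}_i^\top & -\tilde{Y}_i^\top \end{bmatrix}
\begin{bmatrix} I_{p_i} \\ (\tilde{M}^\mathrm{r}_i)^\top \end{bmatrix}
\end{equation*}
yields, after using the definition~\eqref{eq:def-xi-i} of $\Theta_i$, the equivalence
\begin{equation*}
\tilde{M}^\mathrm{r}_i \in \Sigma_{\tilde{M}^\mathrm{r}_i} \iff (\tilde{M}^\mathrm{r}_i)^\top \in \pazocal{Z}_{\tilde{p}_i}(\Theta_i),
\end{equation*}
so that $\Sigma_{\tilde{M}^\mathrm{r}_i} = \bigl(\pazocal{Z}_{\tilde{p}_i}(\Theta_i)\bigr)^\top$.

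Next, since $\Theta_i$ is invertible by assumption and the set $\pazocal{Z}_{\tilde{p}_i}(\Theta_i)$ is nonempty (it contains the transpose of the true interconnection matrix, which is data-consistent with the noiseless realization of $\Xi_i$), I would invoke Proposition~\ref{prop:dualization-lemma} with $q=p_i$ and $r=\tilde{p}_i$. This gives
\begin{equation*}
\Sigma_{\tilde{M}^\mathrm{r}_i} = \bigl(\pazocal{Z}_{\tilde{p}_i}(\Theta_i)\bigr)^\top = \pazocal{Z}_{p_i}(\hat{\Theta}_i),
\end{equation*}
where $\hat{\Theta}_i$ is precisely the matrix defined in~\eqref{eq:def-xi-hat-i}. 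Condition~\eqref{eq:decentralized-stab-cond-with-Mr} for all consistent $\tilde{M}^\mathrm{r}_i$ then reads as the set inclusion $\pazocal{Z}_{p_i}(\hat{\Theta}_i) \subseteq \pazocal{Z}_{p_i}(\Lambda_i)$.

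Finally, I would apply the matrix S-lemma (Proposition~\ref{prop:matrix-S-lemma}) with $N=\hat{\Theta}_i$ and $M=\Lambda_i$, which translates the set inclusion into the existence of $\tau_i \geq 0$ satisfying~\eqref{eq:local-lmi-condition-for-data-driven-stability}. The sufficiency direction is immediate from the first part of the S-lemma. The main obstacle is the necessity direction, which requires verifying the hypotheses $\hat{\Theta}_i \in \mathbf{\Pi}_{\tilde{p}_i, p_i}$ and that $\hat{\Theta}_i$ has at least one positive eigenvalue. For this I would rely on the fact that $\Psi_i \in \mathbf{\Pi}_{p_i, \tilde{N}_i}$ by Assumption~\ref{assumption:interconnection-noise-matrices-QMI-local}, which by the structural properties of dualization in~\cite{van2023quadratic} transfers to $\hat{\Theta}_i \in \mathbf{\Pi}_{\tilde{p}_i, p_i}$; the positive-eigenvalue condition follows from nonemptiness of the consistent set together with invertibility of $\Theta_i$, since a $\mathbf{\Pi}$-class matrix with no positive eigenvalue would force its QMI set to be at most a single point, contradicting the range of admissible $\tilde{M}^\mathrm{r}_i$ compatible with the noise model. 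Combining these four steps concludes the proof.
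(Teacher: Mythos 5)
Your proposal follows the paper's proof essentially step for step: recast data consistency as the QMI $(\tilde{M}^\mathrm{r}_i)^\top \in \pazocal{Z}_{\tilde{p}_i}(\Theta_i)$, dualize via Proposition~\ref{prop:dualization-lemma} to obtain $\pazocal{Z}_{p_i}(\hat{\Theta}_i)$, and apply the matrix S-lemma (Proposition~\ref{prop:matrix-S-lemma}) to get~\eqref{eq:local-lmi-condition-for-data-driven-stability}. The only divergence is in verifying the S-lemma hypotheses, where the paper argues more cleanly via inertia --- $\operatorname{In}(\Theta_i)=(\tilde{p}_i,0,p_i)$ plus Sylvester's law gives $\operatorname{In}(\hat{\Theta}_i)=(p_i,0,\tilde{p}_i)$, hence membership in $\mathbf{\Pi}_{\tilde{p}_i,p_i}$ and a positive eigenvalue --- whereas your contradiction argument is slightly imprecise (since $\hat{\Theta}_i$ is invertible, having no positive eigenvalue would make the consistent set \emph{empty}, not a single point, which is the contradiction you actually want with nonemptiness).
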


\begin{proof}
  The data matrices $\tilde{V}_i,\tilde{Y}_i$ and the noise matrix $\Xi_i$ satisfy
  \begin{equation} \label{eq:matrix-M_i-and-W_3-relation}
    \begin{bmatrix} I_{p_i} \\ (\Xi_i)^\top \end{bmatrix} 
    = \begin{bmatrix} I_{p_i} & \tilde{V}_i \\ 
      0 & - \tilde{Y}_i
      \end{bmatrix}^\top
      \begin{bmatrix} 
        I_{p_i} \\ (\tilde{M}^\mathrm{r}_i)^\top 
      \end{bmatrix} .
  \end{equation}
  Hence, by substituting \eqref{eq:matrix-M_i-and-W_3-relation} into \eqref{eq:interconnection-matrix-relation},
  $\tilde{M}^\mathrm{r}_i \in \Sigma_{\tilde{M}^\mathrm{r}_i}$ holds if and only if the
  matrix $\tilde{M}^\mathrm{r}_i$ satisfies 
  \begin{equation} \label{eq:interconnection-matrix-M_i-qmi}
    \begin{bmatrix}
      I_{p_i} \\ (\tilde{M}^\mathrm{r}_i)^\top
    \end{bmatrix}^\top 
    \Theta_i
    \begin{bmatrix}
      I_{p_i} \\ (\tilde{M}^\mathrm{r}_i)^\top
    \end{bmatrix} 
    \geq 0 .
  \end{equation}
  From Proposition \ref{prop:dualization-lemma},
  the matrix $\tilde{M}^\mathrm{r}_i$ satisfies \eqref{eq:interconnection-matrix-M_i-qmi} if and only if 
  \begin{equation} \label{eq:M_i^c-before-s-lemma}
    \begin{bmatrix}
      I_{\tilde{p}_i} \\
      \tilde{M}^\mathrm{r}_i
    \end{bmatrix}^\top 
    \hat{\Theta}_i
    \begin{bmatrix}
      I_{\tilde{p}_i} \\
      \tilde{M}^\mathrm{r}_i
    \end{bmatrix} 
    \geq 0 .
  \end{equation}

  By construction, it can be shown that $\Theta_i \in \mathbf{\Pi}_{p_i, \tilde{p}_i}$ holds.
  Additionally, with the assumption that $\Theta_i$ is invertible, $\operatorname{In}(\Theta_i) = (\tilde{p}_i, 0, p_i)$ holds \cite[Theorem 3.2]{van2023quadratic}.
  Hence, from \eqref{eq:def-xi-i}, \eqref{eq:def-xi-hat-i} and the Sylvester's law of inertia \cite[Theorem 4.5.8]{horn2012matrix}, it holds that $\operatorname{In}(\hat{\Theta}_i) = (p_i, 0, \tilde{p}_i)$, implying $\hat{\Theta}_i \in \mathbf{\Pi}_{\tilde{p}_i,p_i}$ and that $\hat{\Theta}_i$ has at least one positive eigenvalue.
  Hence, from the matrix S-lemma (Proposition \ref{prop:matrix-S-lemma}), 
  the stability condition \eqref{eq:decentralized-stab-cond-with-Mr} is satisfied for all $\tilde{M}_i^\mathrm{r}$ such that 
  \eqref{eq:M_i^c-before-s-lemma} is satisfied, if and only if there exists $\tau_i \geq 0$ such that \eqref{eq:local-lmi-condition-for-data-driven-stability} is satisfied.
\end{proof}

Theorem \ref{thm:decentralized-unknown} provides a data-driven decentralized LMI condition that ensures the stability of the global system \eqref{eq:global-dynamics}. This can be used as follows: assume that the $i$-th closed-loop system \eqref{eq:closed-loop-system-local} is dissipative with respect to the supply rate $s_i(v_i(t),y_i(t)) = \left[\begin{smallmatrix}
  v_i(t) \\ y_i(t)  
\end{smallmatrix}\right]^\top 
\left[
\begin{smallmatrix}
      H_i & G_i^\top \\
      G_i & F_i
    \end{smallmatrix}\right]^{-1}
    \left[
      \begin{smallmatrix}
        v_i(t) \\ y_i(t)  
      \end{smallmatrix}
    \right]$.
The matrices $\Theta_i$ and $\hat{\Theta}_i$ can be constructed from \eqref{eq:def-xi-i} and \eqref{eq:def-xi-hat-i}, given the data $\tilde{\mathcal{D}}_i = \{\tilde{V}_i, \tilde{Y}_i\}$ and the noise matrix $\Psi_i$ in Assumption \ref{assumption:interconnection-noise-matrices-QMI-local}.
The matrix $\Lambda_i$ can also be constructed from \eqref{eq:construction-lambda_i}, given the supply rate matrices $(F_i,G_i,H_i)$ of the subsystem.
Then, if the condition \eqref{eq:local-lmi-condition-for-data-driven-stability} is satisfied for all $i \in \{1,\dots,k\}$, then for all matrices $\tilde{M}_i^\mathrm{r}$ consistent with the data, condition \eqref{eq:decentralized-stab-cond-with-Mr} holds, which certifies the stability of the global closed-loop system \eqref{eq:global-dynamics}.

\subsubsection*{Invertibility assumption on $\Theta_i$}
Similar to the positive eigenvalue assumption in Theorem \ref{thm:local-controller-design}, we can show that the invertibility assumption on $\Theta_i$ in Theorem \ref{thm:decentralized-unknown} is not overly restrictive.
Under the additional assumption such that the matrix $\tilde{Y}_i$ has full row rank, the invertibility assumption on $\Theta_i$ in Theorem~\ref{thm:decentralized-unknown} is satisfied by an appropriate choice of $\Psi_i$.
Namely, even if $\Theta_i$ is not invertible, we can replace $\Psi_{11,i}$ with $\Psi_{11,i} + \varepsilon I$ with an arbitrarily small $\varepsilon > 0$ and show that the resulting $\Theta_i$ is invertible.
As before, this can be interpreted as increasing the noise bound in Assumption \ref{assumption:interconnection-noise-matrices-QMI-local} with an arbitrarily small amount.
\revised{The invertibility assumption on $\Theta_i$ also imposes a necessary condition on the data length $\tilde{N}_i$. 
For example, $\tilde{N}_i \geq \tilde{p}_i$ is required for the matrix $\tilde{Y}_i$ to have full row rank, which is a sufficient condition for the invertibility of $\Theta_i$ to be guaranteed by an appropriate choice of $\Psi_i$.}

\subsection{Control algorithm for Problem \ref{prb:decentralized-unknown}} \label{subsec:data-driven-decentralized-control-alg}
By combining the LMIs obtained in Sections~\ref{sec:local_controller_design} and \ref{subsec:data-driven-decen-control-unknown-systems}, we propose a data-driven decentralized control design algorithm for Problem~\ref{prb:decentralized-unknown}.
Each subsystem can be made dissipative with respect to a quadratic supply rate \eqref{eq:supply-rate} by solving the LMI \eqref{eq:local-controller-design-lmi} in Theorem~\ref{thm:local-controller-design}. Moreover, the LMI \eqref{eq:local-lmi-condition-for-data-driven-stability} in Theorem~\ref{thm:decentralized-unknown} gives a sufficient decentralized condition for the asymptotic stability of the global closed-loop system, given the supply rate matrices $(F_i,G_i,H_i)$. Here, the matrices $(F_i,G_i,H_i)$ appear linearly in both \eqref{eq:local-controller-design-lmi} and \eqref{eq:local-lmi-condition-for-data-driven-stability}, which allows us to treat them as decision variables.
\revised{By solving  \eqref{eq:local-controller-design-lmi}, \eqref{eq:F_i-decentralized-condition} and \eqref{eq:local-lmi-condition-for-data-driven-stability} with respect to the original decision variables $ P_i, L_i, \alpha_i, \beta_i$ and $\tau_i$ and additionally with the supply rate matrices $(F_i,G_i,H_i)$ that satisfy the inertia condition \eqref{eq:inertia-condition}, local controllers can be designed such that the asymptotic stability of the \revisedtwo{nominal} global system is satisfied. 
}
The procedure is summarized in Algorithm~\ref{alg:decentralized-unknown}.

\begin{algorithm}[h]
  \caption{Data-driven decentralized control}
  \begin{algorithmic}[1] \label{alg:decentralized-unknown}
  \FOR {$i\in\{1,\dots,k\}$}
  \STATE \textbf{Input:} Data $ \mathcal{D}_i $, $\tilde{\mathcal{D}}_{i}$
  \STATE Find $(F_i,G_i,H_i)$, $P_i>0$, $L_i$, $\alpha_i \geq 0$, {$\beta_i > 0$} and $\tau_i \geq 0$
  such that 
  the inertia condition \eqref{eq:inertia-condition} and the LMIs \eqref{eq:local-controller-design-lmi}, \eqref{eq:F_i-decentralized-condition}, \eqref{eq:local-lmi-condition-for-data-driven-stability} are satisfied\label{state:decentralized-control-unknown-lmi}
  \STATE Compute $K_i = L_i P_i^{-1}$
  \ENDFOR
  \RETURN $\{K_i\}_{i=1}^k$
  \end{algorithmic}
\end{algorithm}

\revised{
  The inertia condition \eqref{eq:inertia-condition} is a nonconvex constraint that makes the feasibility problem in step~\ref{state:decentralized-control-unknown-lmi} nonconvex.
  However, one can exploit additional structure to guarantee that \eqref{eq:inertia-condition} is satisfied by construction while yielding a convex problem. 
  For example, for $p_i = 1$, adding the convex constraints $H_i > 0$ and revising \eqref{eq:F_i-decentralized-condition} to $F_i < 0$ for step~\ref{state:decentralized-control-unknown-lmi} ensures that the inertia condition is satisfied.
  Moreover, \eqref{eq:local-lmi-condition-for-data-driven-stability} is an LMI whose size depends only on the number of neighbors, and requires only the output measurements of these neighbors, enabling decentralized control design.
}

\section{Diffusive coupling} 
\label{sec:control-design-for-unknown-diffusive-coupling}
In this section, we consider Problem \ref{prb:diffusive-coupling}, where the interconnection structure is diffusive.
We derive a condition that guarantees the stability of the global system and propose a method to compute the tightest bound consistent with the data.
Combining these results with the LMI \eqref{eq:local-controller-design-lmi} leads to a data-driven decentralized control algorithm.

\subsection{Stability condition based on the upper bound of $d_i$} \label{subsec:data-driven-stability-diffusive}
Recall that the diffusive coupling \eqref{eq:diffusive-coupling} is characterized by scalars $\{a_{ij}\}_{j \in \pazocal{N}_i}$ and the weighted degree is defined as $d_i = \sum_{j \in \pazocal{N}_i} a_{ij}$.
The following proposition from~\cite[Corollary 1]{martinelli2023interconnection} provides a sufficient condition
on the matrices $(F_i, G_i, H_i)$ and the weighted degree $d_i$,
such that the nominal global closed-loop system \eqref{eq:global-dynamics} is asymptotically stable.

\begin{prop}  \label{prop:decentralized-stability-conditions}
  Assume that for all $i \in \{1,\dots,k\}$, the interconnection is the diffusive coupling \eqref{eq:diffusive-coupling} and the nominal closed-loop dynamics of the $i$-th subsystem \eqref{eq:closed-loop-system-local} are dissipative with respect to the supply rate~$s_i(v_i(t),y_i(t))=
  \left[
    \begin{smallmatrix}
        v_i(t) \\ y_i(t) 
    \end{smallmatrix}\right]^\top 
    \left[
    \begin{smallmatrix}
      H_i & G_i^\top \\
      G_i & F_i
    \end{smallmatrix}
    \right]^{\revised{-1}}
    \left[
    \begin{smallmatrix}
      v_i(t) \\ y_i(t)  
    \end{smallmatrix}
    \right]$.
  Then, the nominal global closed-loop system \eqref{eq:global-dynamics} is asymptotically stable if the following condition is satisfied for all $i \in \{1,\dots, k\}$:
  \begin{equation}
    G_i = \frac{1}{2} \alpha I,~ - \frac{1}{2 d_i}I < F_i < 0,~H_i > 2d_i \tilde{\alpha} I,  \label{eq:diffusive-coupling-decentralized-condition}
  \end{equation}
  where $\alpha \in \mathbb{R}$ and $\tilde{\alpha} \coloneqq \max\left\{ 1- \alpha, 0\right\}$ are given scalars.
\end{prop}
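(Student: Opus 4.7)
The plan is to reduce the statement to verifying the condition \eqref{eq:global-stability-condition} from Proposition~\ref{prop:global-stability-condition} under the additional structure imposed by diffusive coupling; the inequality $F \leq 0$ then follows trivially from $F_i < 0$, leaving the strict positive-definiteness of $MFM^\top - MG - G^\top M^\top + H$ as the real task.

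First I would exploit the diffusive structure: with $p_i = 1$ and \eqref{eq:diffusive-coupling}, the interconnection matrix reduces to $M = -L$, where $L$ is the symmetric weighted graph Laplacian with $L_{ii} = d_i$ and $L_{ij} = -a_{ij}$ for $j \in \pazocal{N}_i\setminus\{i\}$; in particular $-x^\top M x = x^\top L x = \tfrac{1}{2}\sum_{i,j}a_{ij}(x_i-x_j)^2 \geq 0$ for all $x \in \mathbb{R}^k$. Because $G = \tfrac{\alpha}{2}I$, one has $MG + G^\top M^\top = \alpha M$, so the matrix inequality reduces to the scalar quadratic form
\begin{equation*}
  \sum_i F_i (Mx)_i^2 - \alpha\, x^\top M x + \sum_i H_i x_i^2 > 0 \quad \text{for all } x \neq 0.
\end{equation*}

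The next step is to bound the three terms via two Laplacian-type estimates. Writing $(Mx)_i = \sum_{j \in \pazocal{N}_i\setminus\{i\}} a_{ij}(x_j - x_i)$ and applying Cauchy--Schwarz yields $(Mx)_i^2 \leq d_i \sum_j a_{ij}(x_j - x_i)^2$; combined with $F_i > -1/(2d_i)$ and summation over $i$, this gives $\sum_i F_i (Mx)_i^2 \geq x^\top M x$. For the third term I would use the elementary estimate $x^\top L x \leq 2\sum_i d_i x_i^2$ (from $(x_i-x_j)^2 \leq 2(x_i^2+x_j^2)$), so that $H_i > 2d_i \tilde\alpha$ together with $x \neq 0$ produces the strict bound $\sum_i H_i x_i^2 > \tilde\alpha(-x^\top M x)$.

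Combining these lower bounds leaves $(\alpha-1)(-x^\top Mx) + \sum_i H_i x_i^2$, and a short case split finishes the argument. When $\alpha \geq 1$ one has $\tilde\alpha = 0$, hence $H_i > 0$ and $(\alpha-1)(-x^\top Mx) \geq 0$, so the sum is strictly positive; when $\alpha < 1$, $\tilde\alpha = 1-\alpha$ and the bound $\sum_i H_i x_i^2 > (1-\alpha)(-x^\top M x)$ exactly cancels the negative contribution $-(1-\alpha)(-x^\top M x)$, with a strictly positive residual coming from the slack $H_i - 2d_i\tilde\alpha > 0$. The main obstacle — and the only subtle point — is tracking strictness through the Cauchy--Schwarz step, which by itself produces only a non-strict bound: the strictness must be inherited from the strict inequalities on $F_i$ and $H_i$, and the degenerate case $Mx = 0$ (equivalently $x$ constant on each connected component of the coupling graph) must be checked separately, where the conclusion reduces to $\sum_i H_i x_i^2 > 0$.
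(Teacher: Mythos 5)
Your argument is correct, but it is worth noting that the paper itself offers no proof of this proposition: it is imported verbatim as Corollary~1 of the cited reference [martinelli2023interconnection], so what you have written is a self-contained derivation the paper never gives. Your route --- identify $M=-L$ with $L$ the weighted Laplacian, reduce \eqref{eq:global-stability-condition-M} to the scalar quadratic form $\sum_i F_i (Lx)_i^2 + \alpha\, x^\top L x + \sum_i H_i x_i^2 > 0$, and control the first and last terms by the Cauchy--Schwarz bound $(Lx)_i^2 \le d_i \sum_j a_{ij}(x_j-x_i)^2$ and the bound $x^\top Lx \le 2\sum_i d_i x_i^2$ --- checks out line by line, including the strictness bookkeeping: the $\ge$ from Cauchy--Schwarz is harmless because the strict margin $H_i - 2d_i\tilde\alpha>0$ (hence $H_i>0$ in all cases) already forces strict positivity for every $x\ne 0$, so the separate treatment of $Lx=0$ you mention is subsumed rather than needed. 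This is almost certainly the content of the cited corollary, so you have reconstructed its proof rather than found a new one. The one point you should have flagged explicitly is the supply-rate convention: Proposition~\ref{prop:global-stability-condition} as stated assumes dissipativity with respect to $\left[\begin{smallmatrix} H_i & G_i^\top \\ G_i & F_i \end{smallmatrix}\right]$, while Proposition~\ref{prop:decentralized-stability-conditions} assumes dissipativity with respect to its \emph{inverse}; reconciling the two requires the dualization step (Proposition~\ref{prop:dualization-lemma}, or equivalently the equivalence of $H - MG - G^\top M^\top + MFM^\top > 0$ with $\left[\begin{smallmatrix} M \\ I\end{smallmatrix}\right]^\top \left[\begin{smallmatrix} H & G^\top \\ G & F\end{smallmatrix}\right]^{-1}\left[\begin{smallmatrix} M \\ I\end{smallmatrix}\right] < 0$ under the inertia condition \eqref{eq:inertia-condition}). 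You silently inherit this gloss from the paper, which uses the same convention throughout its pipeline, so it does not invalidate your verification of \eqref{eq:global-stability-condition}, but a complete standalone proof should state it.
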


In Problem \ref{prb:diffusive-coupling}, the weighted degree $d_i$ is unknown since $a_{ij}$ is unknown, and Proposition \ref{prop:decentralized-stability-conditions} cannot be directly used to ensure the stability of the global system.
Instead, we can show that it suffices to ensure the same condition for some $d_i' \geq d_i$.
Namely, if 
\begin{equation} \label{eq:diffusive-coupling-decentralized-condition-approximated}
  G_i = \frac{1}{2} \alpha I,~- \frac{1}{2 d_i'}I < F_i < 0,~H_i > 2d_i' \tilde{\alpha} I
\end{equation}
is satisfied for some $d_i' \geq d_i$, then \eqref{eq:diffusive-coupling-decentralized-condition} holds, hence the stability of the global system is guaranteed.

\subsection{Data-driven upper bound on $d_i$}
To use \eqref{eq:diffusive-coupling-decentralized-condition-approximated} as a stability condition in \revisedtwo{the} control algorithm, we wish to compute the tightest upper bound of the weighted degree $d_i$, given the interconnection data $\tilde{\mathcal{D}}_i = \{\tilde{V}_i,\tilde{Y}_i\}$. 
To this end, we introduce the set of weighted degrees $d_i$ {consistent with the data} 
\begin{equation*}
  \begin{aligned}
    \Sigma_{d_i}
  \coloneqq &
  \left\{ d_i = \sum_{j=1}^k a_{ij} \mid
  \exists  \tilde{W}_i,\{a_{ij}\}_{j=1}^k ~\text{such that} \right. \\
  &
  \left.
  \text{\eqref{eq:diffusive-coupling} for all $t \in \{0,\dots,\tilde{N}_i-1\}$ and \eqref{eq:interconnection-matrix-relation} hold}\right\},
  \end{aligned}
\end{equation*}
and define the maximum value of $d_i \in \Sigma_{d_i}$ as $d_i^{\max}$.
We show in the following theorem that $d_i^{\max}$ can be computed as a solution of a quadratic program.

\begin{theorem} \label{thm:closed-form-degree}
  \revisedtwo{Assume that $p_i=1$ for all $i \in \{1,\dots, k \}$.
  Consider the data $\tilde{\mathcal{D}}_i = \{\tilde{V}_i,\tilde{Y}_i\}$ and assume that it satisfies Assumption~\ref{assumption:interconnection-noise-matrices-QMI-local}.}
  Define $\revised{\Theta_i} \in \mathbb{R}^{(|\pazocal{N}_i|+1) \times (|\pazocal{N}_i|+1)}$ as in \eqref{eq:def-xi-i}, assume that it is invertible, and let $\hat{\Theta}_i \in \mathbb{R}^{(|\pazocal{N}_i|+1) \times (|\pazocal{N}_i|+1)}$ as in \eqref{eq:def-xi-hat-i}.
  Define
  \begin{equation*}
    \Upsilon_i = 
  \begin{bmatrix}
    - e_1^\top & 0 \\
    0 & 1
  \end{bmatrix}
  \hat{\Theta}_i
  \begin{bmatrix}
    - e_1 & 0 \\
    0 & 1
  \end{bmatrix} \in \mathbb{R}^{2 \times 2} ,
  \end{equation*}
  where $e_1 = \begin{bmatrix}
    1 & 0 & \cdots & 0
  \end{bmatrix}^\top \in \mathbb{R}^{|\pazocal{N}_i|}$.
  Then,
  \begin{equation} \label{eq:degree-quadratic-bound}
    d_i^\mathrm{max} = \max d_i 
    ~ \text{such that}~
    \begin{bmatrix}
      1 \\
      d_i
    \end{bmatrix}^\top 
    \Upsilon_i
    \begin{bmatrix}
      1 \\
      d_i
    \end{bmatrix}
    \geq 0 .
  \end{equation}
\end{theorem}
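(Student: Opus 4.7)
The plan is to recognize $d_i^{\max}$ as the maximum of a linear functional on the data-consistent ellipsoid of interconnection vectors, and then identify its one-dimensional projection with the feasible set of the QP in the statement.

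From the proof of Theorem~\ref{thm:decentralized-unknown} together with Proposition~\ref{prop:dualization-lemma}, $\tilde{M}_i^{\mathrm{r}}\in\Sigma_{\tilde{M}_i^{\mathrm{r}}}$ if and only if $m\coloneqq(\tilde{M}_i^{\mathrm{r}})^{\top}$ satisfies the scalar QMI $\begin{bmatrix} 1 \\ m \end{bmatrix}^{\top}\Theta_i\begin{bmatrix} 1 \\ m \end{bmatrix}\geq 0$ (scalar because $p_i=1$). Under the diffusive coupling one has $M_{ii}=-d_i$, so $d_i=-e_1^{\top} m$, and hence
\begin{equation*}
    d_i^{\max}=\max\bigl\{-e_1^{\top} m : m\in\pazocal{Z}_{\tilde{p}_i}(\Theta_i)\bigr\}.
\end{equation*}

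Next I would compute the projection of $\pazocal{Z}_{\tilde{p}_i}(\Theta_i)$ onto the coordinate $m_1=e_1^{\top} m$. Invertibility of $\Theta_i$ combined with $\Theta_i\in\mathbf{\Pi}_{p_i,\tilde{p}_i}$ yields $\operatorname*{In}(\Theta_i)=(\tilde{p}_i,0,p_i)$ (as used in the proof of Theorem~\ref{thm:decentralized-unknown}) and strict negative definiteness of the $\tilde{p}_i{\times}\tilde{p}_i$ lower-right block. Writing $m=(m_1,m_\ast)$ and maximizing the scalar QMI over $m_\ast$ via the Schur complement gives the projected constraint
\begin{equation*}
    \bigl\{m_1\in\mathbb{R} : \begin{bmatrix} 1 \\ m_1 \end{bmatrix}^{\top} S \begin{bmatrix} 1 \\ m_1 \end{bmatrix}\geq 0\bigr\},
\end{equation*}
where $S\in\mathbb{S}^{2}$ is the Schur complement in $\Theta_i$ of its $(\tilde{p}_i{-}1)\times(\tilde{p}_i{-}1)$ tail block. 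The substitution $m_1=-d_i$ turns this into an interval of admissible $d_i$.

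It then remains to show that $\Upsilon_i$ cuts out exactly the same interval. Unpacking the definition of $\hat{\Theta}_i$ and contracting with $\begin{bmatrix} -e_1 & 0 \\ 0 & 1 \end{bmatrix}$ on the right (and its transpose on the left), $\Upsilon_i$ is expressible through the entries $(\Theta_i^{-1})_{00},(\Theta_i^{-1})_{01},(\Theta_i^{-1})_{11}$; more precisely, $\Upsilon_i$ equals $-J$ times the top-left $2{\times}2$ block of $\Theta_i^{-1}$ times $J$, where $J$ is the $2{\times}2$ swap matrix. By block inversion, that top-left block equals $S^{-1}$, and the standard $2\times 2$ inverse identity yields
\begin{equation*}
    \begin{bmatrix} 1 \\ d_i \end{bmatrix}^{\top}\Upsilon_i\begin{bmatrix} 1 \\ d_i \end{bmatrix}=-\tfrac{1}{\det S}\bigl(S_{11} d_i^{2}-2 S_{01} d_i+S_{00}\bigr),
\end{equation*}
a scalar multiple of the projected Schur-complement quadratic in $d_i$. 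Haynsworth additivity applied to $\operatorname*{In}(\Theta_i)=(\tilde{p}_i,0,1)$ and $\operatorname*{In}(\text{tail block})=(\tilde{p}_i{-}1,0,0)$ gives $\operatorname*{In}(S)=(1,0,1)$, so $\det S<0$ and the multiplier $-1/\det S>0$; hence the two quadratic inequalities describe the same interval, and taking the maximum on each side yields the claim.

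I expect the main obstacle to be the block-matrix bookkeeping that establishes $\Upsilon_i$ as (up to a $2\times 2$ swap and sign) the top-left block of $\Theta_i^{-1}$, together with the inertia argument confirming $\det S<0$: without this sign check, the $\Upsilon_i$-QMI would describe the complement of the Schur-projected interval rather than itself.
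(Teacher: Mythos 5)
Your proposal is correct, but it reaches \eqref{eq:degree-quadratic-bound} by a genuinely more self-contained route than the paper. The paper's proof is two lines: it notes $d_i=-\tilde{M}_i^{\mathrm{r}}e_1$, recalls from the proof of Theorem~\ref{thm:decentralized-unknown} that the data-consistent $\tilde{M}_i^{\mathrm{r}}$ are exactly those satisfying the dual QMI \eqref{eq:M_i^c-before-s-lemma} with $\hat{\Theta}_i$, and then invokes an external result (Theorem 3.4 of the cited van Waarde et al.\ reference) stating that the image of a QMI-defined set under a fixed linear map is again QMI-defined, with the new defining matrix obtained precisely by the congruence that produces $\Upsilon_i$. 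You instead work on the primal side with $\Theta_i$, eliminate the non-degree coordinates of $m=(\tilde{M}_i^{\mathrm{r}})^\top$ by explicit Schur-complement maximization (legitimate because invertibility plus $\Theta_i\in\mathbf{\Pi}_{p_i,\tilde{p}_i}$ forces the lower-right block, and hence its tail principal submatrix, to be negative definite), and then verify by block inversion that $\Upsilon_i=-JS^{-1}J$ with $J$ the $2\times2$ swap and $S$ the relevant Schur complement; Haynsworth additivity gives $\operatorname*{In}(S)=(1,0,1)$, so $\det S<0$ and the sign works out so that the two quadratics in $d_i$ cut out the same interval. I checked the bookkeeping: $\hat{\Theta}_i=\left[\begin{smallmatrix}-D & b\\ b^\top & -a\end{smallmatrix}\right]$ when $\Theta_i^{-1}=\left[\begin{smallmatrix}a & b^\top\\ b & D\end{smallmatrix}\right]$, whence $\Upsilon_i=-\left[\begin{smallmatrix}D_{11} & b_1\\ b_1 & a\end{smallmatrix}\right]$, which is indeed $-J$ times the inverse of $S$ times $J$, and your sign analysis is the step that makes the equivalence of the two intervals non-vacuous. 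In effect you have re-proved, for this one-dimensional special case, the black-box lemma the paper cites; what your version buys is transparency (no external dependence, and an explicit reason why the feasible set of \eqref{eq:degree-quadratic-bound} is a bounded interval so the maximum is attained), at the cost of the block-matrix computation. One minor point you share with the paper rather than fix: both proofs identify $\Sigma_{d_i}$ with $\{-e_1^\top m : m\in\pazocal{Z}_{\tilde{p}_i}(\Theta_i)\}$, silently dropping the sign and structural constraints on the weights $a_{ij}$; this only enlarges the set, so the resulting $d_i^{\max}$ is still a valid (possibly conservative) upper bound, which is all that is needed downstream.
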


\begin{proof}
  Since $d_i = - M_{ii}$, it holds that $d_i = - \tilde{M}_i^\mathrm{r} e_1$, and we have
  \begin{equation} \label{eq:degree-set-data-consistent}
    \begin{aligned}
      \Sigma_{d_i}
    \coloneqq &
    \left\{ - \tilde{M}_i^\mathrm{r} e_1 \mid
    \exists  \tilde{W}_i 
    \text{~such that \eqref{eq:interconnection-matrix-relation} and \eqref{eq:data-relation-int} hold}\right\} .
    \end{aligned}
  \end{equation}
  From the proof of Theorem \ref{thm:decentralized-unknown}, $\tilde{M}_i^{\mathrm{r}}$ is characterized as \eqref{eq:M_i^c-before-s-lemma}, and since $
    d_i = - \tilde{M}_i^\mathrm{r} e_1$,
  we can apply \cite[Theorem 3.4]{van2023quadratic} and obtain the quadratic program \eqref{eq:degree-quadratic-bound}.
\end{proof}

\subsection{Control algorithm for Problem \ref{prb:diffusive-coupling}}
By combining the results developed in Section \ref{subsec:data-driven-stability-diffusive} with the local control design LMI \eqref{eq:local-controller-design-lmi}, we propose a data-driven decentralized control algorithm for Problem \ref{prb:diffusive-coupling}.
We first compute the maximum weighted degree consistent with data $d_i^{\max}$ using \eqref{eq:degree-quadratic-bound}.
Then, we combine the LMIs \eqref{eq:local-controller-design-lmi} and \eqref{eq:diffusive-coupling-decentralized-condition-approximated} with $d_i' = d_i^{\max}$, and treat the supply rate matrices $(F_i,G_i,H_i)$ as decision variables along with $P_i$, $L_i$ and $\alpha_i$,
as shown in Algorithm \ref{alg:decentralized-diffusive-coupling}.
Note that the diffusive coupling condition has to be given to the subsystems beforehand and the parameter $\alpha$ in \eqref{eq:diffusive-coupling-decentralized-condition-approximated} has to be agreed for all subsystems.
\revised{We also note that for $p_i = 1$, the inertia 
condition~\eqref{eq:inertia-condition} is satisfied by 
construction for any $\alpha$, since 
condition~\eqref{eq:diffusive-coupling-decentralized-condition-approximated} 
already enforces $F_i < 0$ and $H_i > 0$. 
The inertia condition is therefore omitted from 
Algorithm~\ref{alg:decentralized-diffusive-coupling}.}

\begin{algorithm}
  \caption{Data-driven decentralized control for diffusive coupling}
  \begin{algorithmic}[1] \label{alg:decentralized-diffusive-coupling}
  \FOR {$i \in  \{1,\dots,k\}$}
  \STATE \textbf{Input:} Data $ \mathcal{D}_i $, $\tilde{\mathcal{D}}_i$
  \STATE Compute $d_i^{\max}$ from \eqref{eq:degree-quadratic-bound} and set $d_i' = d_i^{\max}$
  \STATE Find $(F_i,G_i,H_i)$, $P_i>0$, $L_i$ and $\alpha_i \geq 0$ such that 
  \revised{the LMIs
  \eqref{eq:local-controller-design-lmi}, \eqref{eq:diffusive-coupling-decentralized-condition-approximated} are satisfied} \label{state:diffusive-control-unknown-lmi}
  \STATE Compute $K_i = L_i P_i^{-1}$
  \ENDFOR
  \RETURN $\{K_i\}_{i=1}^k$
  \end{algorithmic}
\end{algorithm}

\section{Interconnected Microgrid Examples}
\label{sec:numerical_examples}
In this section, we apply the proposed control algorithms to direct current (DC) microgrid systems comprising distributed generation units (DGUs), interconnected via transmission lines\footnote{\revisedtwo{Code is available at \href{https://gitlab.ethz.ch/nakanot/ddd-interconnectedsystems-codes}{\texttt{https://gitlab.ethz.ch/nakanot/ddd-interconnectedsystems-codes}}.}}.
We consider the model of DGUs in Figure \ref{Fig:Buck}, where the input $V_{\mathrm{in}, i}$ represents a DC source and the averaged model of a Buck converter is modeled as an RLC circuit with parameters $R,L$ and $C$.
For simplicity, we consider the input voltage $V_{\mathrm{in}, i}$ directly as an input and do not consider acting on the duty cycle of the switch controlling it.
The internal current and the external current from the neighboring DGUs are denoted by~$I_i$ and $I_{G,i}$, respectively.
We also assume constant impedance loads $I_{L,i} = Y V_i$ with the internal voltage denoted by $V_i$ and conductance $Y>0$.

The nominal {discrete-time} dynamics of the $i$-th DGU obtained by zero-order hold discretization with sampling time $T_s$ become
\begin{subequations}\label{eq:dgu-dynamics-discrete-time}
	\begin{align}
		x_i(t+1) & = A_i x_i(t) + B_{1,i} u_i(t) + B_{2,i} v_i(t) \\
		y_i(t) & = C_i x_i(t),
	\end{align}
\end{subequations}
where $x_i = \begin{bmatrix} V_i & I_i \end{bmatrix}^\top$ is the state, $u_i = V_{\mathrm{in},i}$ is the control input, $v_i=I_{G,i}$ is the interconnection input and the system matrices are given by
\begin{align*}
	A_i & = 
	\begin{bmatrix} 
		1-T_sY/C & T_s/C \\
		-T_s/L & \revised{1-T_sR/L}  
	\end{bmatrix} ,~ 
	B_{1,i} = 
	\begin{bmatrix} 
		0 \\ 
		T_s/\revised{L}
	\end{bmatrix}, \\
	B_{2,i} & = \begin{bmatrix} T_s/C \\ 0 \end{bmatrix} ,~
	C_i = 
	\begin{bmatrix} 
		1 & 0 
	\end{bmatrix}.
\end{align*} 

The DGUs are coupled with each other via resistive transmission lines, giving rise to a diffusive coupling 
\begin{equation} \label{interconnectionlines}
	v_i(t) = \sum_{j\in\pazocal{N}_i} \frac{1}{R_{ij}} (y_j(t) - y_i(t)),
\end{equation}
where $R_{ij}>0$ denotes the resistance of the line connecting the $i$-th DGU with the $j$-th DGU,
and $\pazocal{N}_i$ contains the indices of the DGUs connected to the $i$-th DGU.
We consider $k = 50$ DGUs coupled with a ring interconnection plus additional $20$ edges, with the interconnection topology in Fig. \ref{fig:network}.
\revised{The global dynamics do not have an algebraic loop, since $D_{2,i} = 0$ for all $i$.}

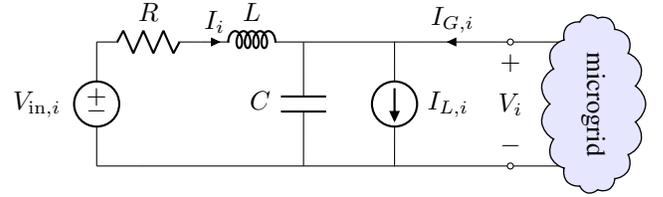
\begin{figure}
	\centering
	\begin{circuitikz}[scale=0.40]
		\ctikzset{bipoles/length=1cm}
		\draw (1,0) to[V, v=$V_{\mathrm{in},i}$, invert] (1,3);
		\draw (1,3) to[R,l=$R$] (3.5,3);
		\draw (3.5,3) to[L,i>^=$I_i$,l=$L$,] (6,3);
		\draw (6,0) to[C,l=$C$] (6,3);
		\draw (6,0) -- (2,0);
		\draw (2,0) -- (1,0); 
		\draw (6,3) -- (8.2,3) to[american current source,l=$I_{L, i}$] (8.2,0) -- (6,0);
		\draw (8.2,3) to[short,i<=$I_{G, i}$] (11,3) to[short,-o] (11,3) -- (12.2,3);
		\draw (8.2,0) to[short,-o] (11,0) -- (12.2,0);
		\node at (11,2.5) {$+$};
		\node at (11,0.5) {$-$};
		\node at (11,1.5) {$V_i$};
		\node [draw, cloud, cloud puffs=15, aspect=2, cloud puff arc=120, line width=0.5, rotate=270, fill=blue!10, minimum width=2cm, minimum height=1cm] at (13,1.5) {microgrid};
	\end{circuitikz}
	\caption{Model of the $i$-th distributed generation unit (DGU).}
	\label{Fig:Buck}
\end{figure}

\begin{figure}
	\centering
	\includegraphics[width=0.6\columnwidth]{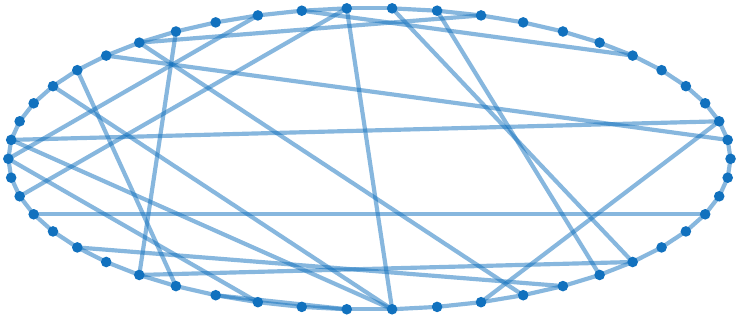}
	\caption{Graph with $k=50$ nodes representing the microgrid model. 
	The nodes represent the DGUs and the edges represent line resistances between two interconnected DGUs.}
	\label{fig:network}
\end{figure}

We consider the situation where the electrical parameters of each DGU as well as the resistance of each transmission line are {unknown}.
Instead, we assume that we can collect $N_i$-length data of the voltage and the current
from the system with noise \eqref{eq:dynamics-of-the-ith-subsystem-with-noise},
where the matrix $W_i$ satisfies Assumption \ref{assumption:noise-matrices-quadratic-bound} with 
$(\Phi_{11, i},\Phi_{12, i},\Phi_{22, i}) = ( N_i \varepsilon_l I, 0, - I)$.
From Section \ref{sec:problem_formulation}, this implies that~$
\left\| \left[\begin{smallmatrix}
	w_{1,i}(t) \\ w_{2,i}(t)
\end{smallmatrix} \right]\right\|_2^2  \leq \varepsilon_l$ holds for all $t = 0,\dots,N_i - 1$.
Since the line resistance $R_{ij}$ is unknown, the interconnection structure \eqref{interconnectionlines} is also unknown.
We assume that we collect~$\tilde{N}_i$-length trajectories from the interconnection relation \eqref{eq:interconnection-output-local-relation}, where the matrix $\tilde{W}_i$ satisfies Assumption \ref{assumption:interconnection-noise-matrices-QMI-local} with 
$(\Psi_{11,i},\Psi_{12,i},\Psi_{22,i}) = (\tilde{N}_i \varepsilon_g I, 0, - I)$.
Again, this implies that the noise vector satisfies $
\left\| \tilde{w}_i(t) \right\|_2^2  \leq \varepsilon_g$ for all $t = 0,\dots,\tilde{N}_i-1$.

We aim to stabilize the internal current $I_i$ and the internal voltage $V_i$ of all DGUs with state-feedback controllers designed in a decentralized fashion on the basis of data.
This problem can be considered as an instance of Problem~\ref{prb:decentralized-unknown} if we assume that the interconnection structure is unknown,
and that of Problem~\ref{prb:diffusive-coupling} if we exploit the fact that the system is coupled by diffusive coupling.
We consider stabilization to the origin for simplicity, but stabilization to any desired steady state can be achieved by the appropriate changes of variables.

Based on the collected data from each DGU, we apply both Algorithms \ref{alg:decentralized-unknown} and \ref{alg:decentralized-diffusive-coupling},
and compute a state-feedback controller for each DGU to make the closed-loop global system stable. \revised{The discrete-time dynamics~\eqref{eq:dgu-dynamics-discrete-time} are 
obtained by zero-order hold discretization with sampling 
time $T_s = 10^{-3}$\,s.}
The electrical parameters used in the experiment are drawn from uniform distributions according to Table \ref{Tab:electricalparameters}.
The lengths of the local data and the interconnection data are both $N_i = \tilde{N}_i = 50$, collected using piecewise constant inputs uniformly drawn from $[-1,1]$ that change once every 5 time steps.
The parameters of the noise matrices are set to $\varepsilon_l = \varepsilon_g = 0.001$.
In Algorithm \ref{alg:decentralized-diffusive-coupling}, we set the parameter $\alpha = 1$.
In both algorithms, we solve the LMIs without the inertia condition~\eqref{eq:inertia-condition} initially, then we check that the matrices~$(F_i,G_i,H_i)$ satisfy the condition.
We solve the LMIs using
YALMIP \cite{lofberg2004yalmip} with MOSEK \cite{mosek} using MATLAB (2025a) on a Macbook Pro with Apple M2 processor and 16 GB of RAM.
The behaviors of the internal currents and voltages of the global closed-loop system without noise are shown in Fig.~\ref{fig:numerical-experiments-alg4} for Algorithm \ref{alg:decentralized-unknown} and Fig.~\ref{fig:numerical-experiments-alg5} for Algorithm \ref{alg:decentralized-diffusive-coupling}.
In both algorithms, the current and the voltage asymptotically converge to the origin, which shows the effectiveness of the controllers. \revised{In addition, Fig.~\ref{fig:iss} shows the closed-loop trajectories under nonzero noise satisfying Assumptions~\ref{assumption:noise-matrices-quadratic-bound} and~\ref{assumption:interconnection-noise-matrices-QMI-local}, verifying the input-to-state stability of the noisy closed-loop system.}
For visualization purposes, we consider the closed-loop system without the process noise and the measurement noise.
As pointed out in Section~\ref{subsec:control-objective}, this implies the input-to-state stability with respect to the noise.
Furthermore, we generate $100$ random systems and apply Algorithms \ref{alg:decentralized-unknown} and \ref{alg:decentralized-diffusive-coupling}.
Both algorithms are feasible for all $100$ instances, and the average computation times per each DGU for these $100$ instances are $0.0023$s for Algorithm \ref{alg:decentralized-unknown} and $0.0018$s for Algorithm \ref{alg:decentralized-diffusive-coupling}.

\begin{table}
	\centering
	\begin{tabular}{lll}
		\hline \\[-0.2cm]
		Parameter & Symbol & Value \\ \\[-0.25cm]
		\hline \\ [-0.2cm]
		Internal Resistance & $R$ & $0.2 \pm 0.1 \; \Omega$ \\
		Internal Inductance & $L$ & $0.5 \pm 0.05 \; \revised{mH}$ \\
		Internal Capacitance & $C$ & $10 \pm 1.0 \; \revised{mF}$ \\
		Load Conductance & $Y$ & $0.2 \pm 0.02 \; S$ \\
		Line Resistance & $R_{ij}$ & $4 \pm 0.4 \; \Omega$ \\ \\ [-0.25cm]
		\hline
	\end{tabular}
	\caption{Electrical parameters used in the numerical experiments, comparable with parameters in \cite{riverso2014plug,tucci2016consensus}. The values for each DGU are drawn from uniform distributions indicated by the corresponding intervals.}
	\label{Tab:electricalparameters}
\end{table}

\begin{figure}
	\centering
	{{\includegraphics[width=0.45\columnwidth]{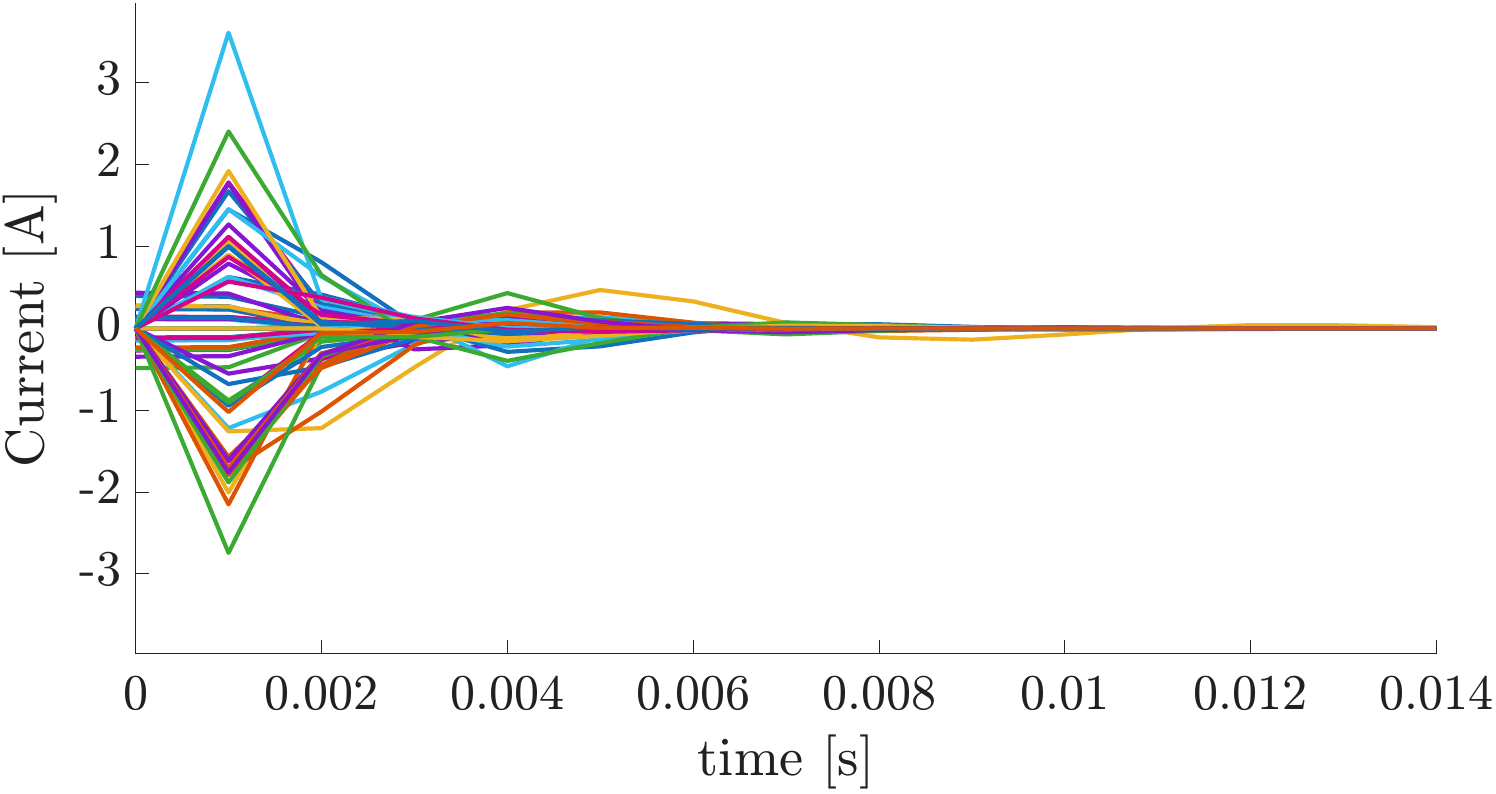} }}%
	{{\includegraphics[width=0.45\columnwidth]{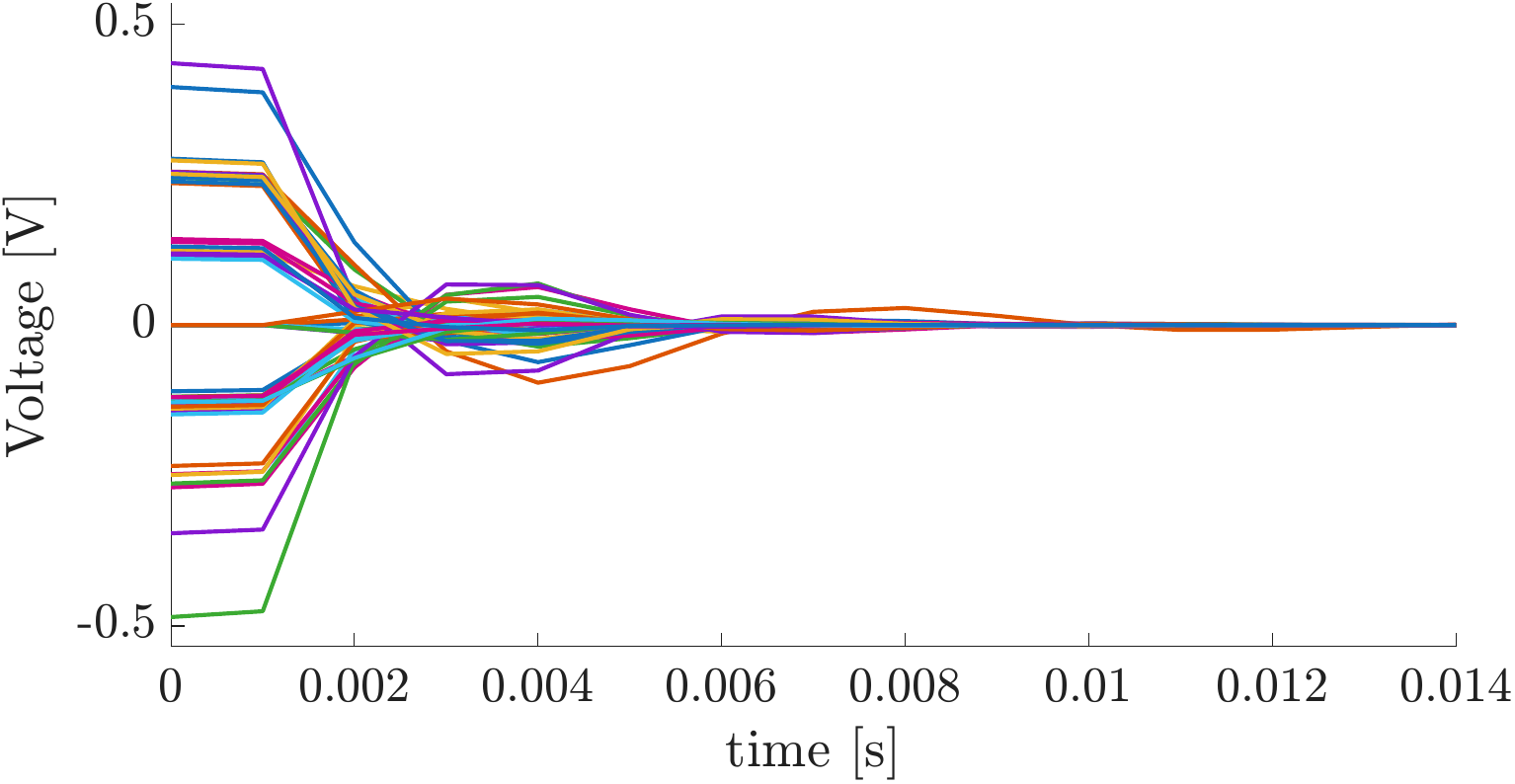} }}%
	\caption{\revised{The currents and the voltages of the DGUs after applying state-feedback controllers computed from Algorithm~\ref{alg:decentralized-unknown}.}}%
	\label{fig:numerical-experiments-alg4}%
\end{figure}

\begin{figure}
	\centering
	{{\includegraphics[width=0.45\columnwidth]{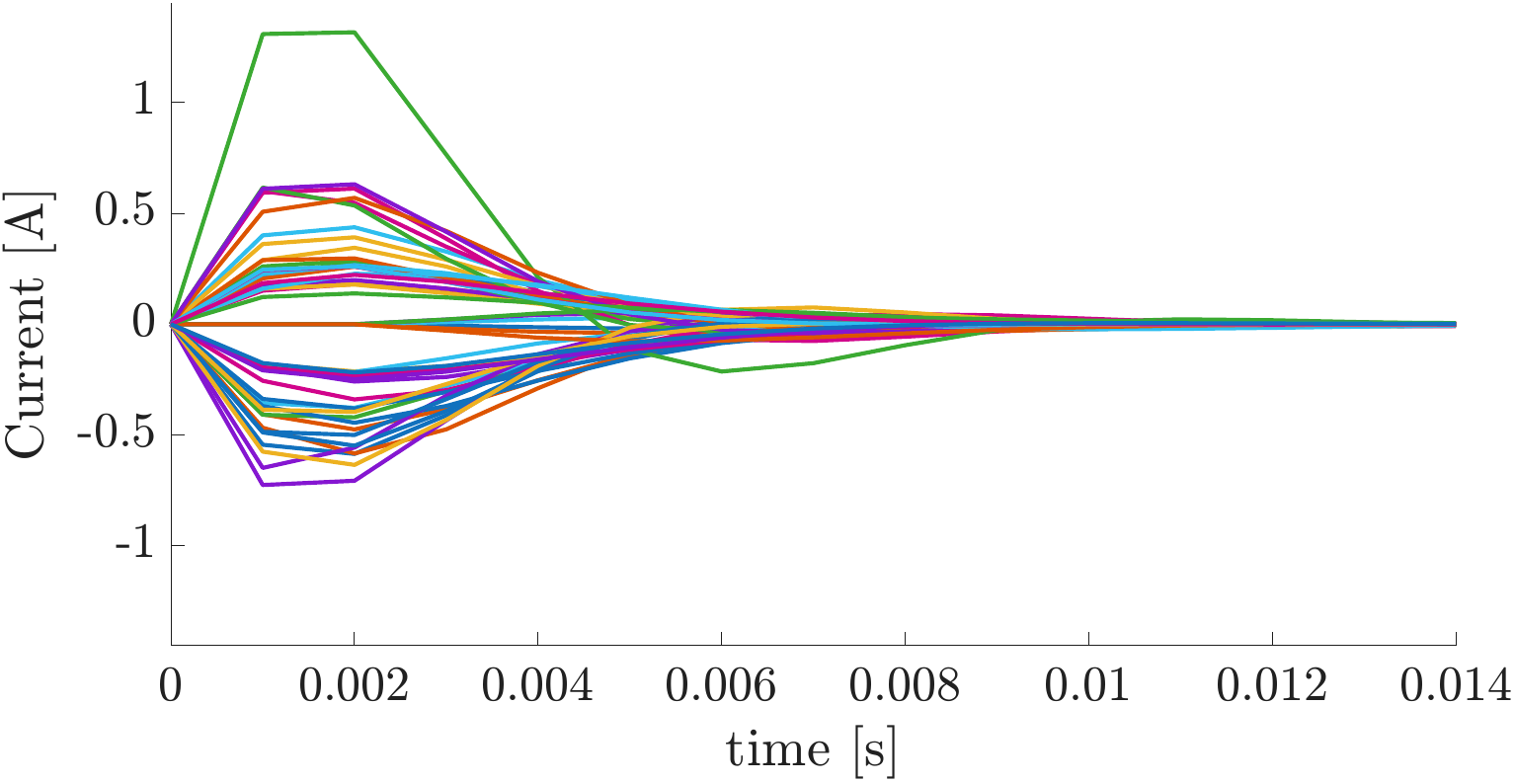} }}%
	{{\includegraphics[width=0.45\columnwidth]{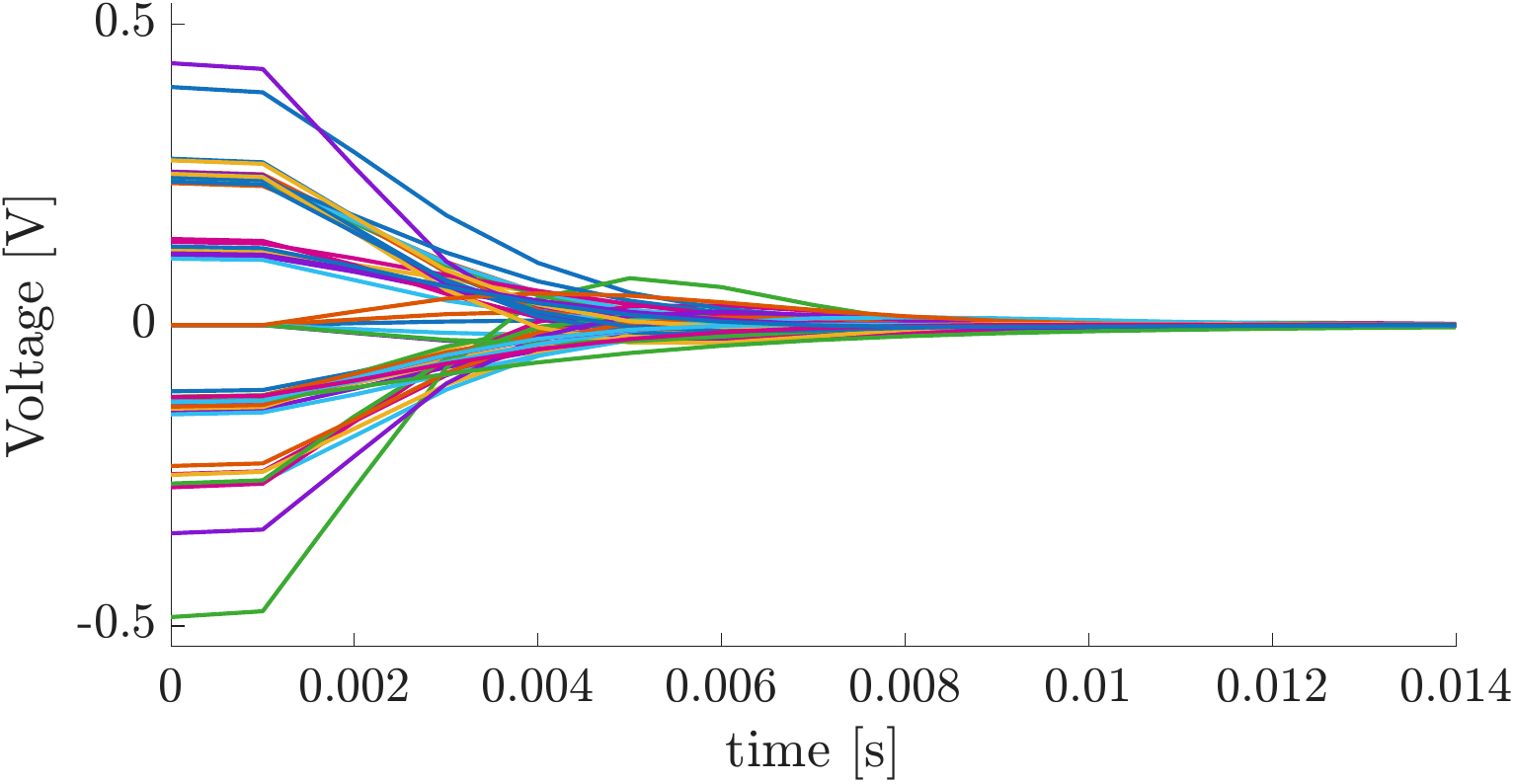} }}%
	\caption{\revised{The currents and the voltages of the DGUs after applying state-feedback controllers computed from Algorithm~\ref{alg:decentralized-diffusive-coupling}.}}%
	\label{fig:numerical-experiments-alg5}%
\end{figure}

\begin{figure}
	\centering
	{{\includegraphics[width=0.45\columnwidth]{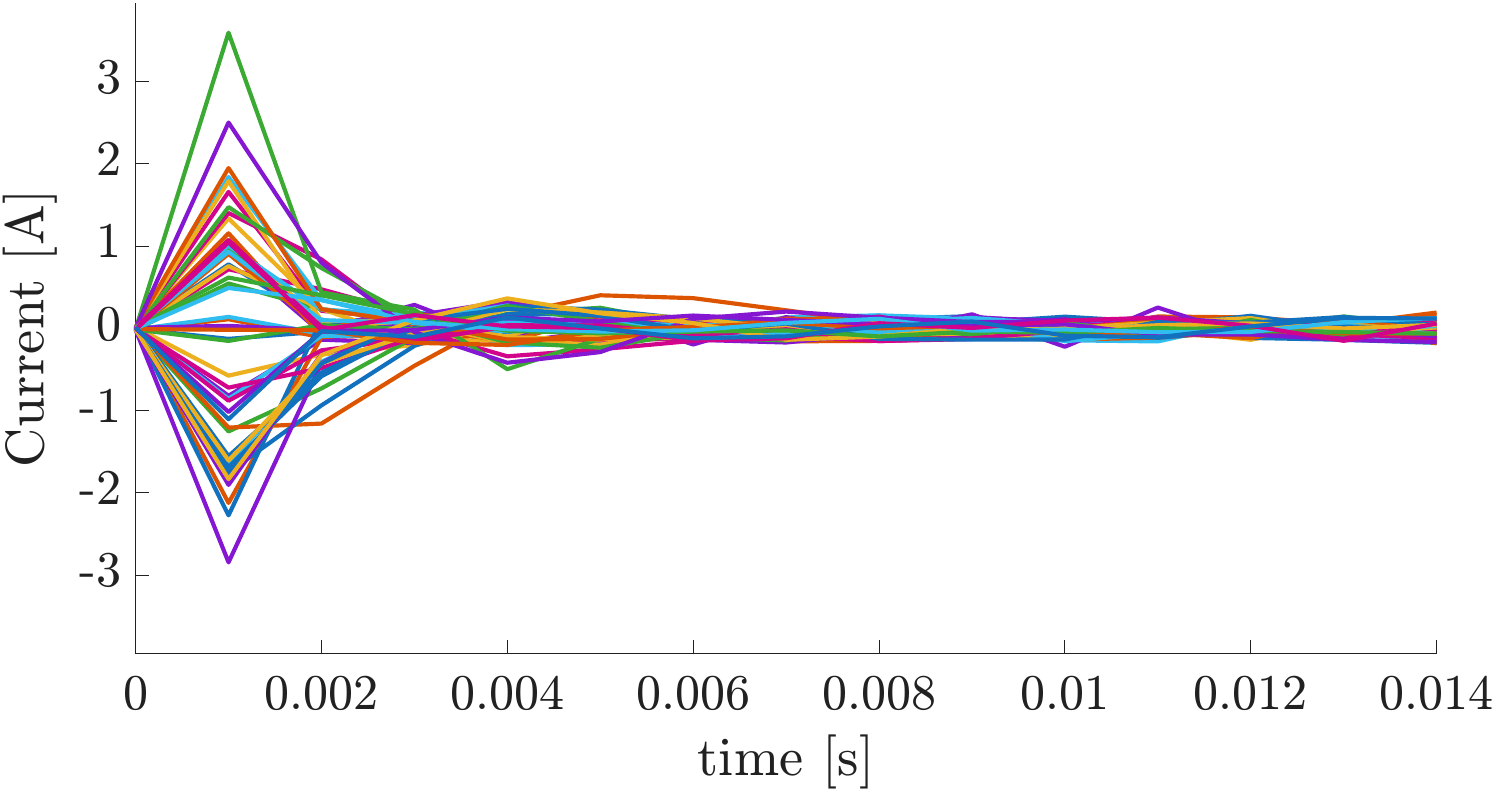} }}%
	{{\includegraphics[width=0.45\columnwidth]{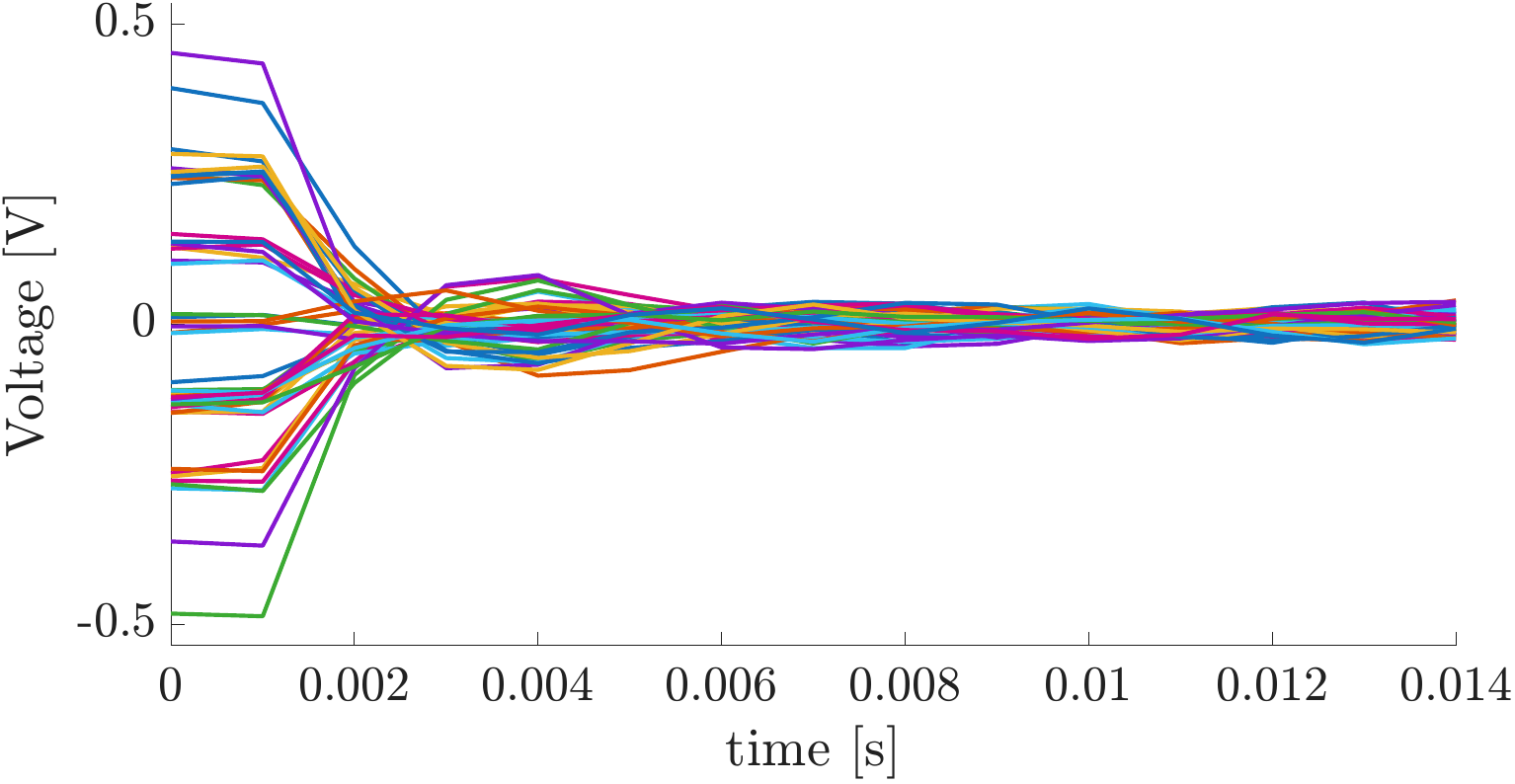} }}%
	\caption{\revised{The currents and the voltages of the DGUs after applying state-feedback controllers computed from Algorithm~\ref{alg:decentralized-unknown}, with the presence of noise.}}%
	\label{fig:iss}%
\end{figure}

\subsubsection*{Different experimental conditions}
Next, starting from the basic setup of $k = 50$ subsystems with $20$ non-ring interconnections,  data lengths $N_i = \tilde{N}_i = 50$ and noise parameters $\varepsilon_l = \varepsilon_g = 0.001$, we either (i) decrease the data lengths to $N_i = \tilde{N}_i = 20, 30, 40$, (ii) increase the noise parameters to \revised{$\varepsilon_l = \varepsilon_g = 0.0025, 0.005, 0.01$}, or (iii) increase the number of non-ring interconnections to $50, 75, 100$, and apply Algorithms \ref{alg:decentralized-unknown} and \ref{alg:decentralized-diffusive-coupling}.
For each condition, we generate $100$ random systems and compute the number of feasible instances, shown in Table \ref{Tab:feasibility-alg12}.

It can be seen that the number of feasible instances when applying Algorithm \ref{alg:decentralized-diffusive-coupling} decreases in all cases, whereas all instances remain feasible when applying Algorithm \ref{alg:decentralized-unknown}.
We conjecture that this is because in Algorithm \ref{alg:decentralized-diffusive-coupling}, the tightest upper bound of the weighted degree computed from \eqref{eq:degree-quadratic-bound} becomes larger and the feasible region of the stability condition \eqref{eq:diffusive-coupling-decentralized-condition-approximated} decreases, leading to infeasibility. 
Algorithm \ref{alg:decentralized-unknown}, on the other hand, does not require the computation of such an upper bound.

\revised{
	Increasing the noise parameters $\varepsilon_l$ and 
	$\varepsilon_g$ reduces the feasibility of Algorithm~\ref{alg:decentralized-diffusive-coupling}, whereas Algorithm~\ref{alg:decentralized-unknown} remains 
feasible in all instances.
	This suggests that larger noise levels complicate the synthesis problem, since the set of the system matrices and the interconnection matrices consistent with the noisy data gets enlarged.
}

\begin{table}
		\begin{subtable}[l]{.5\textwidth} \centering
		\begin{tabular}{lcccc}
			\hline \\[-0.2cm]
			Noise parameters & $0.001$  & $0.0025$ & $0.005$ & $0.01$ \\ \\[-0.25cm]
			\hline \\ [-0.2cm]
			Algorithm \ref{alg:decentralized-unknown} & $100\%$ & $100\%$ &$100\%$ & $100\%$  \\
			Algorithm \ref{alg:decentralized-diffusive-coupling} & $100\%$ & $85\%$ & $11\%$ & $0\%$ \\ \\ [-0.25cm]
			\hline
		\end{tabular}
		\caption{Feasible instances with respect to noise parameters $\varepsilon_l, \varepsilon_g$.}
		\label{Tab:2}
	\end{subtable}
	\begin{subtable}[l]{.5\textwidth} \centering
		\begin{tabular}{lcccc}
			\hline \\[-0.2cm]
			Data lengths & $20$ & $30$ & $40$ & $50$  \\ \\[-0.25cm]
			\hline \\ [-0.2cm]
			Algorithm \ref{alg:decentralized-unknown} & $100\%$ & $100\%$ & $100\%$ & $100\%$ \\
			Algorithm \ref{alg:decentralized-diffusive-coupling} & $0\%$ & $53\%$ & $97\%$ & $100\%$ \\ \\ [-0.25cm]
			\hline
		\end{tabular}
		\caption{Feasible instances with respect to data lengths $N_i, \tilde{N}_i$.}
		\label{Tab:3}
	\end{subtable}
	\begin{subtable}[l]{.5\textwidth}  \centering
		\begin{tabular}{lcccc}
			\hline \\[-0.2cm]
			Number of edges & $20$  & $50$ & $75$ & $100$ \\ \\[-0.25cm]
			\hline \\ [-0.2cm]
			Algorithm \ref{alg:decentralized-unknown} & $100\%$ & $100\%$ & $100\%$ & $100\%$ \\
			Algorithm \ref{alg:decentralized-diffusive-coupling} & $100\%$ & $94\%$ & $75\%$ &  $3\%$ \\ \\ [-0.25cm]
			\hline
		\end{tabular}
		\caption{Feasible instances with respect to non-ring interconnections.}
		\label{Tab:4}
	\end{subtable}
	\caption{Number of feasible instances out of $100$ random systems with respect to different parameters, applying Algorithms \ref{alg:decentralized-unknown} and \ref{alg:decentralized-diffusive-coupling}.}
	\label{Tab:feasibility-alg12}
\end{table}

\subsubsection*{Comparison to a centralized structured controller}
Structured control aims at designing controllers that satisfy particular structural constraints~\cite{jovanovic2016controller}.
Based on~\cite{miller2024data}, we implement a structured data-driven controller, which is a decentralized controller since it does not communicate with other controllers, but is designed in a centralized fashion.
We show that it requires more data and computation time than Algorithms \ref{alg:decentralized-unknown} and \ref{alg:decentralized-diffusive-coupling}.

From \eqref{eq:dgu-dynamics-discrete-time} and \eqref{interconnectionlines}, the global dynamics of the microgrid system can be written as 
\begin{equation*}
  \begin{aligned}
    x(t+1) & = A x(t) + B_{1} u(t) + B_{2} v(t), \\
		y(t) & = C x(t), \\
    v(t) &= M y(t),
  \end{aligned}
\end{equation*}
with global vectors $x(t),u(t),v(t),y(t)$ and global matrices $A,B_1,B_2,C$. 
By eliminating $v(t)$ and $y(t)$, we obtain
\begin{equation} \label{eq:global-system-microgrid}
  x(t+1) = (A + B_2 M C) x(t) + B_{1} u(t).
\end{equation}
Since we design a local controller of the form $u_i = K_i x_i$ for all $i\in \{1,\dots,k\}$, we design in a centralized fashion a structured state-feedback controller $u = K x$ such that $K = \operatorname{diag}(K_1,\dots,K_k) \in \mathbb{R}^{k \times 2k}$.
Since the system matrices are unknown, we assume that we can collect $N_g$-length data trajectories $\{x(t)\}_{t=0}^{N_g}$ and $\{u(t)\}_{t=0}^{N_g-1}$ from \eqref{eq:global-system-microgrid}.
Then, we adopt from \cite{miller2024data} a data-driven LMI condition, where the decision variables are restricted to be block diagonal.
We consider $k = 50$ subsystems with $20$ non-ring interconnections and check the feasibility of the LMI condition, with respect to different data lengths $N_g = 100,400,600,800$.
A number of $100$ random system instances are generated, and we compute the number of feasible instances and the average computation time of the data-driven structured controller, shown in Table~\ref{Tab:5}.
The centralized structured control approach has advantages such that it does not require the information of the interconnection structure and can be generalized to other sparsity patterns of the state-feedback gain.
However, Table \ref{Tab:5} shows that significantly more data is required to achieve feasibility, compared to Algorithms \ref{alg:decentralized-unknown} and \ref{alg:decentralized-diffusive-coupling} that require only data lengths of $50$.
Moreover, the computation time required is considerably more than that required for Algorithms \ref{alg:decentralized-unknown} and \ref{alg:decentralized-diffusive-coupling}, which are $0.0023$s and $0.0018$s, respectively.

\begin{table}
		\centering
		\begin{tabular}{lcccc}
			\hline \\[-0.2cm]
			Data length & $100$ & $400$ & $600$ & $800$ \\ \\[-0.25cm]
			\hline \\ [-0.2cm]
			Feasible instances & $0\%$ & $0\%$ & $87\%$ & $100\%$ \\
			Computation time (s)& $23.58$ & $25.66$ & $35.08$ & $ 23.94$ \\ \\ [-0.25cm]
			\hline
		\end{tabular}
		\caption{Feasible instances out of 100 random systems and the average computation time with respect to different data lengths, when applying structured data-driven control~\cite{miller2024data}.}
		\label{Tab:5}
	\end{table}

\section{Conclusions}
\label{sec:conclusions_and_discussions}
We present a data-driven decentralized control algorithm for unknown interconnected systems.
We derive the local controller design LMI \eqref{eq:local-controller-design-lmi} and the decentralized LMI \eqref{eq:local-lmi-condition-for-data-driven-stability}, both based on dissipativity theory.
Since the supply rate matrices $(F_i,G_i,H_i)$ appear linearly in both LMIs, we are able to treat them as decision variables to solve them at once, yielding local controllers that guarantee asymptotic stability of the global system.
Moreover, for the diffusive coupling case, we develop a quadratic programming formulation \eqref{eq:degree-quadratic-bound} to obtain the upper bound of the weighted degree, which is combined with the stability condition \eqref{eq:diffusive-coupling-decentralized-condition}.
Again, dissipativity theory serves as a pipeline to combine multiple conditions to yield a data-driven decentralized control algorithm.
Numerical examples are shown in the context of microgrids, showcasing the effectiveness and the scalability of the proposed control algorithms, compared to a centralized structured data-driven controller.
\revised{
Future research directions include the online adaptation of the proposed algorithms, incorporating performance metrics to the LMIs to improve the transient performance and the robustness against noise, and the consideration of broader classes of systems including nonlinear local dynamics and dynamic interconnection.
}

\section*{References}
\bibliographystyle{ieeetr}
\bibliography{reffile}

\begin{IEEEbiography}[{\includegraphics[width=1in,height=1.25in,clip,keepaspectratio]{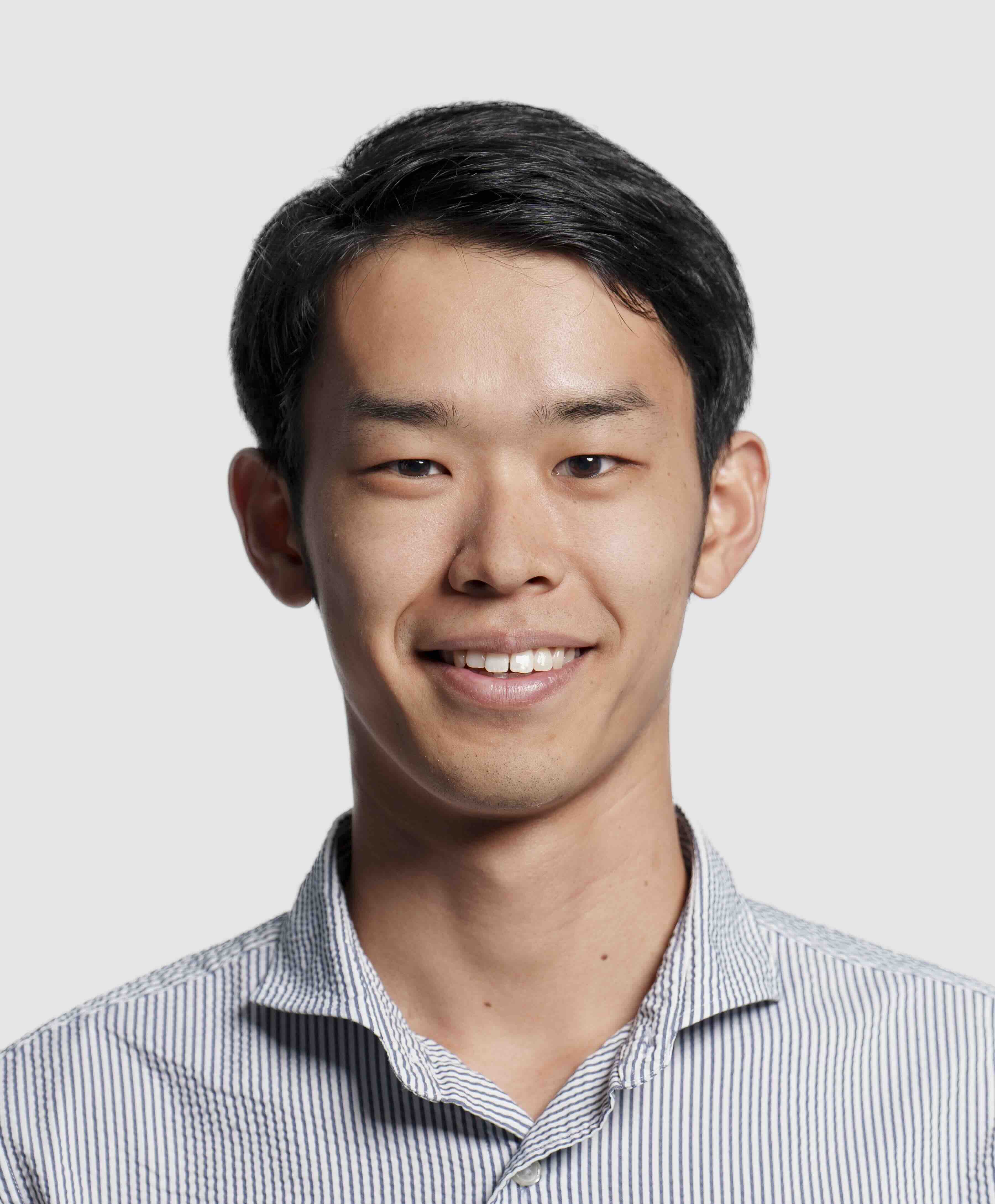}}]{Taiki Nakano}
	is a Ph.D. student at the Automatic Control Laboratory, ETH Zürich, Switzerland and the Learning and Dynamical Systems Group at the Max Planck Institute for Intelligent Systems, Tübingen, Germany, affiliated with the Max Planck ETH Center for Learning Systems (CLS).
	He received his B.Eng. degree in 2022 and his M.Sc. degree in 2024, both from the University of Tokyo, Japan.
	His research interests include system theory, optimization and machine learning.
\end{IEEEbiography}

\begin{IEEEbiography}[{\includegraphics[width=1in,height=1.25in,clip,keepaspectratio]{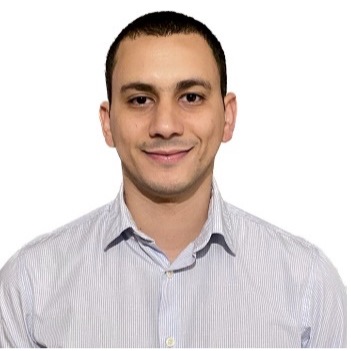}}]{Ahmed Aboudonia} 
	received a B.Sc. in Mechatronics Engineering from the German University in Cairo and an M.Sc. in Control Engineering from Sapienza University of Rome. He completed his PhD with the Automatic Control Lab at ETH Zurich. Currently, he is a postdoctoral researcher with the Advanced Control Research Lab at the University of Illinois Urbana-Champaign. Previously, he has conducted research at the German Aerospace Center and the Technical University of Darmstadt. His research interests include adaptive, data-driven, and learning-based control, with applications to robotics and energy systems.
\end{IEEEbiography}

\begin{IEEEbiography}[{\includegraphics[width=1in,height=1.25in,clip,keepaspectratio]{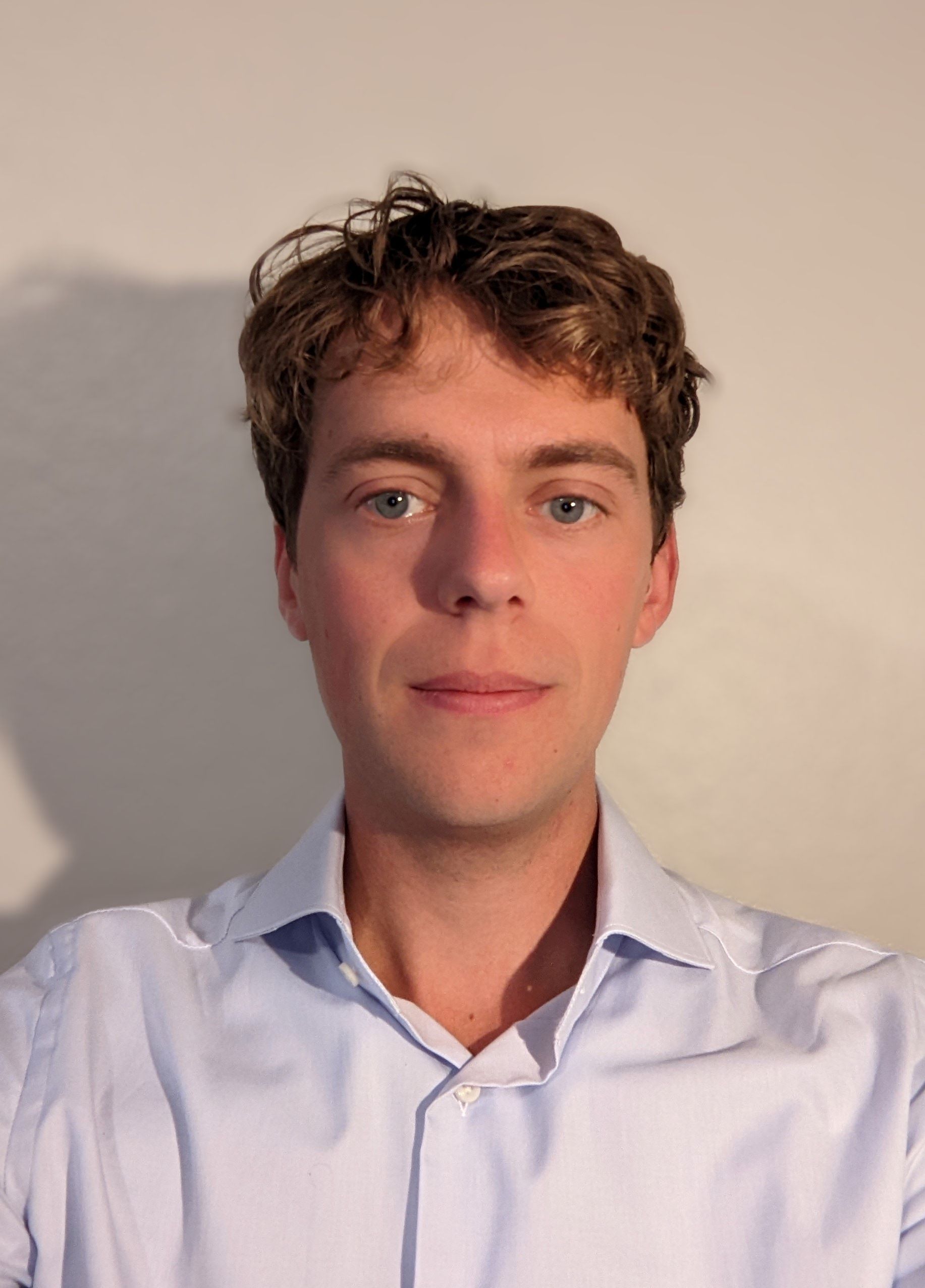}}]{Jaap Eising} 
  is a senior scientist at the Automatic Control Laboratory, ETH Zürich. Before that, he was at the Department of Mechanical and Aerospace Engineering at the University of California, San Diego. He attained the Ph.D. degree at the University of Groningen in 2021, after obtaining the M.Sc. in Mathematics at the same university in 2017. His research interests include data-driven control and optimization, behavioral methods, and algorithm design.
\end{IEEEbiography}

\begin{IEEEbiography}[{\includegraphics[width=1in,height=1.25in,clip,keepaspectratio]{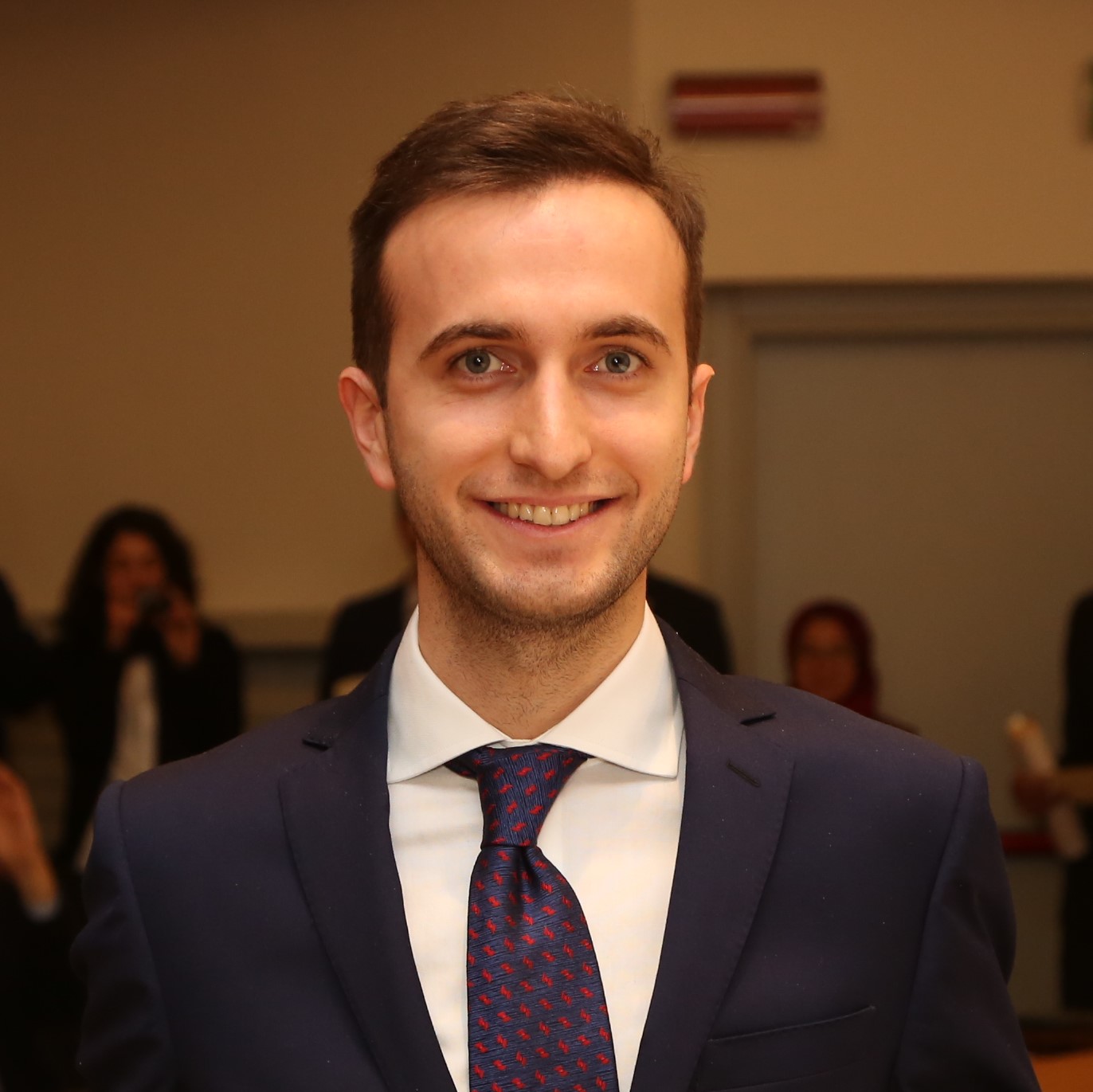}}]{Andrea Martinelli}
	is a Postdoctoral Researcher at the Automatic Control Laboratory, ETH Zürich, Switzerland, and Program Manager of the Certificate of Advanced Studies (CAS) ETH in Automation. He received the B.Sc. degree in management engineering and the M.Sc. degree in control engineering from Politecnico di Milano, Italy, in 2015 and 2017, respectively. In 2024, he obtained a PhD degree at the Automatic Control Laboratory, ETH Zürich, for which he received the ETH Medal for outstanding doctoral thesis. In 2017, he conducted his M.Sc. thesis at the Laboratoire d'Automatique, EPF Lausanne, Switzerland. In 2018, he was a Research Assistant with the Systems \& Control Group at Politecnico di Milano. His research interests include data-driven control, dynamic programming, and dissipativity theory for interconnected systems.
\end{IEEEbiography}

\begin{IEEEbiography}[{\includegraphics[width=1in,height=1.25in,clip,keepaspectratio]{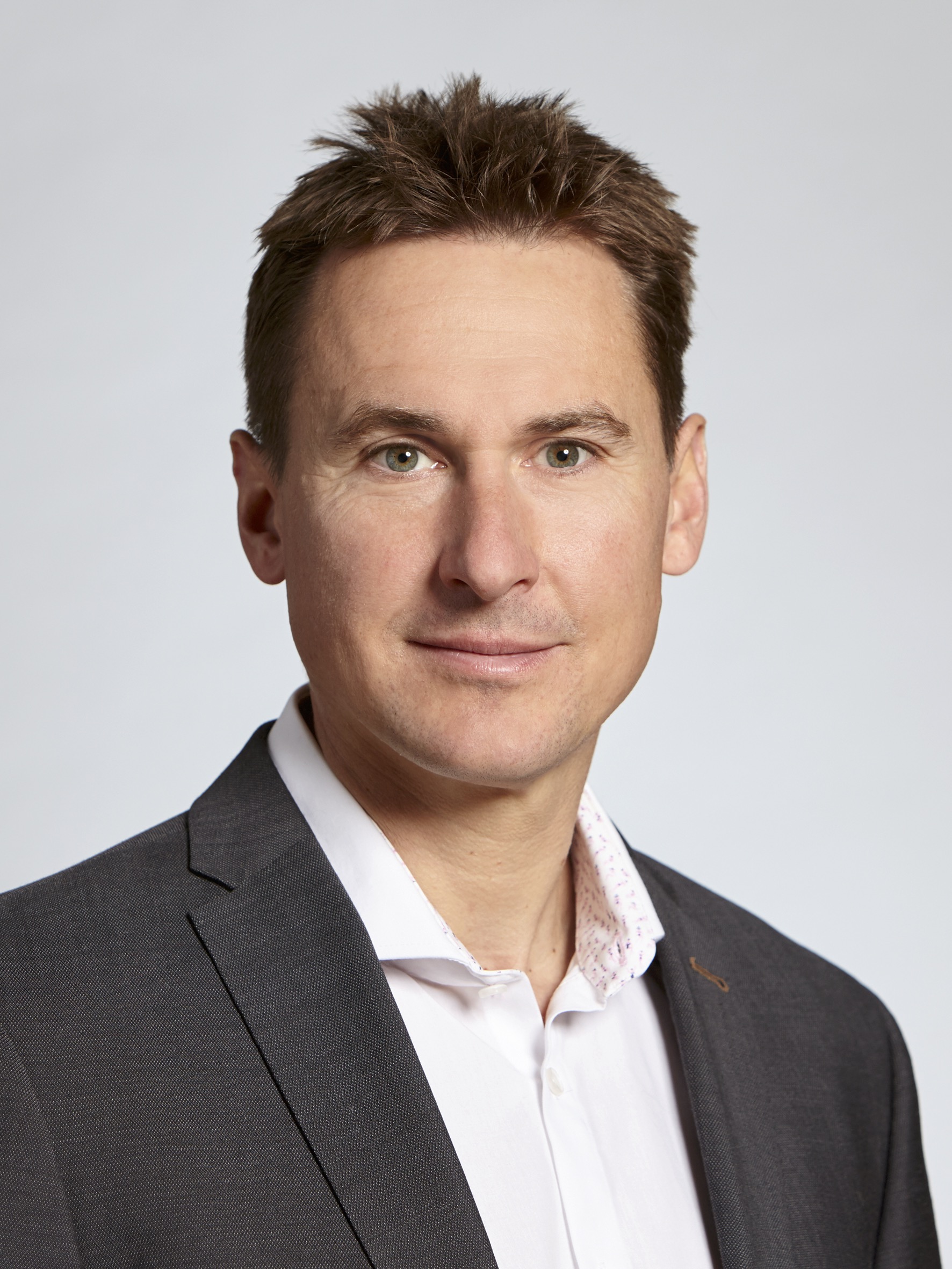}}]{Florian Dörfler}
	is a Full Professor at the Automatic Control Laboratory at ETH Zürich. He received his Ph.D. degree in Mechanical Engineering from the University of California at Santa Barbara in 2013, and a Diplom degree in Engineering Cybernetics from the University of Stuttgart in 2008. From 2013 to 2014 he was an Assistant Professor at the University of California Los Angeles. He has been serving as the Associate Head of the ETH Zürich Department of Information Technology and Electrical Engineering from 2021 until 2022. His	research interests are centered around control, optimization, and system theory with applications in network systems, in particular electric power grids.
\end{IEEEbiography}

\begin{IEEEbiography}[{\includegraphics[width=1in,height=1.25in,clip,keepaspectratio]{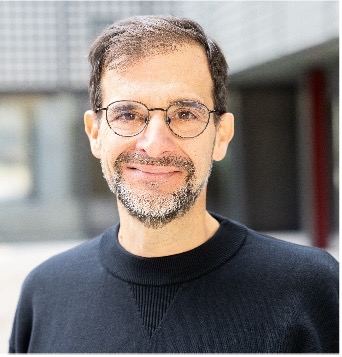}}]{John Lygeros}
	received a B.Eng. degree in 1990 and an M.Sc. degree in 1991 from Imperial College, London, U.K. and a Ph.D. degree in 1996 at the University of California, Berkeley. After research appointments at M.I.T., U.C. Berkeley and SRI International, he joined the University of Cambridge in 2000 as a University Lecturer. Between March 2003 and July 2006 he was an Assistant Professor at the Department of Electrical and Computer Engineering, University of Patras, Greece. In July 2006 he joined the Automatic Control Laboratory at ETH Zurich where he is currently serving as the Head of the laboratory.
\end{IEEEbiography}

\end{document}